\newcommand{\hili}[1]{#1}
\newcommand{\zhenqi}[1]{}
\newcommand{\ep}[0]{\mathbbm{E}}
\DeclareMathOperator*{\argmin}{arg\,min}
\DeclareMathOperator*{\grad}{\bigtriangledown}
\newcommand{\pj}[1]{Proj_{#1}}
\newcommand{\sayan}[1]{}
\newcommand{\num}[1]{\relax\ifmmode \mathbb #1\else $\mathbb #1$\fi}
\newcommand{\nnnum}[1]{\relax\ifmmode 
  {\mathbb #1}_{\geq 0} \else ${\mathbb #1}_{\geq 0}$
  \fi}
\newcommand{\npnum}[1]{\relax\ifmmode 
  {\mathbb #1}_{\leq 0} \else ${\mathbb #1}_{\leq 0}$
  \fi}
\newcommand{\pnum}[1]{\relax\ifmmode 
  {\mathbb #1}_{> 0} \else ${\mathbb #1}_{> 0}$
  \fi}
\newcommand{\nnum}[1]{\relax\ifmmode 
  {\mathbb #1}_{< 0} \else ${\mathbb #1}_{< 0}$
  \fi}
\newcommand{\plnum}[1]{\relax\ifmmode 
  {\mathbb #1}_{+} \else ${\mathbb #1}_{+}$
  \fi}
\newcommand{\nenum}[1]{\relax\ifmmode 
  {\mathbb #1}_{-} \else ${\mathbb #1}_{-}$
  \fi}
\newcommand{\reals}{{\num R}}                    
\newcommand{\naturals}{{\num N}}                      
\newcommand{\pr}{{\num P}}                      
\newcommand{\extb}[1]{\relax\ifmmode {\sf ExtBeh}_{#1} \else ${\sf ExtBeh}_{#1}$\fi} 
\newcommand{\tdists}[1]{\relax\ifmmode {\sf Tdists}_{#1} \else ${\sf Tdists}_{#1}$\fi} 
\newcommand{\exec}[1]{\relax\ifmmode {\sf Execs}_{#1} \else ${\sf Exec}_{#1}$\fi} 
\newcommand{\execf}[1]{\relax\ifmmode {\sf Execs}^*_{#1} \else ${\sf Exec}^*_{#1}$\fi} 
\newcommand{\execi}[1]{\relax\ifmmode {\sf Execs}^\omega_{#1} \else ${\sf Exec}^\omega_{#1}$\fi} 
\newcommand{\ctrace}[1]{\relax\ifmmode {\sf Ctraces}_{#1} \else ${\sf Ctraces}_{#1}$\fi} 
\newcommand{\trace}[1]{\relax\ifmmode {\sf Traces}_{#1} \else ${\sf Traces}_{#1}$\fi} 
\newcommand{\tracef}[1]{\relax\ifmmode {\sf Traces}^*_{#1} \else ${\sf Traces}^*_{#1}$\fi} 
\newcommand{\tracei}[1]{\relax\ifmmode {\sf Traces}^\omega_{#1} \else ${\sf Traces}^\omega_{#1}$\fi} 
\newcommand{\frag}[1]{\relax\ifmmode {\sf Frags}_{#1} \else ${\sf Frags}_{#1}$\fi} 
\newcommand{\fragf}[1]{\relax\ifmmode {\sf Frags}^*_{#1} \else ${\sf Frags}^*_{#1}$\fi} 
\newcommand{\fragi}[1]{\relax\ifmmode {\sf Frags}^\omega_{#1} \else ${\sf Frags}^\omega_{#1}$\fi} 
\newcommand{\reach}[1]{\relax\ifmmode {\sf Reach}_{#1} \else ${\sf Reach}_{#1}$\fi} 
\newcommand{\pair}[2]{\relax\ifmmode \langle #1, #2 \rangle \else $\langle #1, #2 \rangle$\fi} 
\newcommand{\TE}{\relax\ifmmode \mathit{Time} \else $\mathit{Time}$ \fi} 
\newcommand{\EQ}{\relax\ifmmode \mathit{Enq} \else $\mathit{Enq}$ \fi} 
\newcommand{\DQ}{\relax\ifmmode \mathit{Deq} \else $\mathit{DeqTime}$ \fi} 
\newcommand{\E}{\relax\ifmmode \mathsf{E} \else $\mathsf{E}$ \fi}
\newcommand{\loc}{\relax\ifmmode \mathit{loc} \else $\mathit{loc}$ \fi}
\newcommand{\abs}{\relax\ifmmode \mathit{abs} \else $\mathit{abs}$ \fi}
 \newtheorem{theorem}{Theorem}
 \newtheorem{lemma}[theorem]{Lemma}
 \newtheorem{assumption}{Assumption}
 \newtheorem{definition}{Definition}
 \newtheorem{prop}[theorem]{Proposition}
  \newcounter{rem}
\newcounter{example}
 \newenvironment{example}
 {\refstepcounter{example} \vspace{2ex}\par\noindent
 \textbf{Example}~\textbf{\theexample~}}{\qed}
\def\A{{\cal A}} 
\def\E{{\cal E}} 
\def\F{{\cal F}} 
\def\G{{\cal G}} 
\def\I{{\cal I}} 
\def\P{{\cal P}} 
\def\R{{\cal R}} 
\def\T{{\cal T}} 
\def\V{{\cal V}} 
\def\U{{\cal U}} 
\def\W{{\cal W}}
\def\X{{\cal X}} 
\newcommand{\col}[1]{\relax\ifmmode \mathscr #1\else $\mathscr #1$\fi}
\definecolor{HIOAcolor}{rgb}{0.776,0.22,0.07}
\newcommand{\SC}[2]{\relax\ifmmode {\tt Scount}(#1,#2) \else ${\tt Scount}(#1,#2)$\fi} 
\newcommand{\SCM}[2]{\relax\ifmmode {\tt Smin}(#1,#2) \else ${\tt Smin}(#1,#2)$\fi} 
\newcommand{\Aut}[1]{\relax\ifmmode {\tt Aut}(#1) \else ${\tt Aut}(#1)$\fi}
\newcommand{\deq}{\mathrel{\stackrel{\scriptscriptstyle\Delta}{=}}}
\newcommand{\remove}[1]{}
\newcommand{\salg}[1]{\relax\ifmmode {\mathcal F}_{#1}\else ${\mathcal F}_{#1}$\fi} 
\newcommand{\msp}[1]{\relax\ifmmode (#1, \salg{#1}) \else $(#1, \salg{#1})$\fi} 
\newcommand{\msprod}[2]{\relax\ifmmode ( #1 \times #2, \salg{#1} \otimes \salg{#2}) \else $(#1 \times #2, \salg{#1} \otimes \salg{#2})$\fi} 
\newcommand{\dist}[1]{\relax\ifmmode {\mathcal P}\msp{#1}
  \else ${\mathcal P}\msp{#1}$\fi} 
\newcommand{\subdist}[1]{\relax\ifmmode {\mathcal S}{\mathcal P}\msp{#1} 
  \else ${\mathcal S}{\mathcal P}\msp{#1}$\fi} 
\newcommand{\disc}[1]{\relax\ifmmode {\sf Disc}(#1)
  \else ${\sf Disc}(#1)$\fi} 
\newcommand{\Trajeq}{\relax\ifmmode {\mathcal R}_\T \else ${\mathcal R}_\T$\fi} 
\newcommand{\Acteq}{\relax\ifmmode {\mathcal R}_A \else ${\mathcal R}_A$\fi} 
\newcommand{\noop}{\relax\ifmmode \lambda \else $\lambda$\fi} 
\newcommand{\close}[1]{\relax\ifmmode \overline{#1} \else $\overline{#1}$\fi}
\newcommand{\tup}[1]
           {
             \relax\ifmmode
             \langle #1 \rangle
             \else $\langle$ #1 $\rangle$ \fi
           }
\newcommand{\lit}[1]{ \relax\ifmmode
                \mathord{\mathcode`\-="702D\sf #1\mathcode`\-="2200}
                \else {\it #1} \fi }
\newcommand{\figuresize}{\scriptsize}
\lstdefinelanguage{ioa}{
  basicstyle=\figuresize,
  keywordstyle=\bf \figuresize,
  identifierstyle=\it \figuresize,
  emphstyle=\tt \figuresize,
  mathescape=true,
  tabsize=20,
  sensitive=false,
  columns=fullflexible,
  keepspaces=false,
  flexiblecolumns=true,
  basewidth=0.05em,
  moredelim=[il][\rm]{//},
  moredelim=[is][\sf \figuresize]{!}{!},
  moredelim=[is][\bf \figuresize]{*}{*},
  keywords={automaton,and, 
         choose,const,continue, components,
         discrete, do, derived,
         eff, external,else, elseif, evolve, end,each,
         fi,for, forward, from,
         hidden,
         in,input,internal,if,invariant, initially, imports,
     let,
     or, output, operators, od, of,
     pre, prob,
     return,
     such,satisfies, stop, signature, simulation, state, stochastic,
     trajectories,trajdef, transitions, that,then, type, types, to, tasks,
     variables, vocabulary, uni,
     when,where, with,while},
  emph={set, seq, tuple, map, array, enumeration},   
   literate=
        {(}{{$($}}1
        {)}{{$)$}}1
        {\\in}{{$\in\ $}}1
        {\\preceq}{{$\preceq\ $}}1
        {\\subset}{{$\subset\ $}}1
        {\\subseteq}{{$\subseteq\ $}}1
        {\\supset}{{$\supset\ $}}1
        {\\supseteq}{{$\supseteq\ $}}1
        {\\forall}{{$\forall$}}1
        {\\le}{{$\le\ $}}1
        {\\ge}{{$\ge\ $}}1
        {\\gets}{{$\gets\ $}}1
        {\\cup}{{$\cup\ $}}1
        {\\cap}{{$\cap\ $}}1
        {\\langle}{{$\langle$}}1
        {\\rangle}{{$\rangle$}}1
        {\\exists}{{$\exists\ $}}1
        {\\bot}{{$\bot$}}1
        {\\rip}{{$\rip$}}1
        {\\emptyset}{{$\emptyset$}}1
        {\\notin}{{$\notin\ $}}1
        {\\not\\exists}{{$\not\exists\ $}}1
        {\\ne}{{$\ne\ $}}1
        {\\to}{{$\to\ $}}1
        {\\implies}{{$\implies\ $}}1
        {<}{{$<\ $}}1
        {>}{{$>\ $}}1
        {=}{{$=\ $}}1
        {~}{{$\neg\ $}}1
        {|}{{$\mid$}}1
        {'}{{$^\prime$}}1
        {\\A}{{$\forall\ $}}1
        {\\E}{{$\exists\ $}}1
        {\\/}{{$\vee\,$}}1
        {\\vee}{{$\vee\,$}}1
        {/\\}{{$\wedge\,$}}1
        {\\wedge}{{$\wedge\,$}}1
        {=>}{{$\Rightarrow\ $}}1
        {->}{{$\rightarrow\ $}}1
        {<=}{{$\Leftarrow\ $}}1
        {<-}{{$\leftarrow\ $}}1
        {~=}{{$\neq\ $}}1
        {\\U}{{$\cup\ $}}1
        {\\I}{{$\cap\ $}}1
        {|-}{{$\vdash\ $}}1
        {-|}{{$\dashv\ $}}1
        {<<}{{$\ll\ $}}2
        {>>}{{$\gg\ $}}2
        {||}{{$\|$}}1
        {[}{{$[$}}1
        {]}{{$\,]$}}1
        {[[}{{$\langle$}}1
        {]]]}{{$]\rangle$}}1
        {]]}{{$\rangle$}}1
        {<=>}{{$\Leftrightarrow\ $}}2
        {<->}{{$\leftrightarrow\ $}}2
        {(+)}{{$\oplus\ $}}1
        {(-)}{{$\ominus\ $}}1
        {_i}{{$_{i}$}}1
        {_j}{{$_{j}$}}1
        {_{i,j}}{{$_{i,j}$}}3
        {_{j,i}}{{$_{j,i}$}}3
        {_0}{{$_0$}}1
        {_1}{{$_1$}}1
        {_2}{{$_2$}}1
        {_n}{{$_n$}}1
        {_p}{{$_p$}}1
        {_k}{{$_n$}}1
        {-}{{$\ms{-}$}}1
        {@}{{}}0
        {\\delta}{{$\delta$}}1
        {\\R}{{$\R$}}1
        {\\Rplus}{{$\Rplus$}}1
        {\\N}{{$\N$}}1
        {\\times}{{$\times\ $}}1
        {\\tau}{{$\tau$}}1
        {\\alpha}{{$\alpha$}}1
        {\\beta}{{$\beta$}}1
        {\\gamma}{{$\gamma$}}1
        {\\ell}{{$\ell\ $}}1
        {--}{{$-\ $}}1
        {\\TT}{{\hspace{1.5em}}}3        
      }
\lstdefinelanguage{ioaNums}[]{ioa}
{
  numbers=left,
  numberstyle=\tiny,
  stepnumber=2,
  numbersep=4pt
}
\lstdefinelanguage{ioaNumsRight}[]{ioa}
{
  numbers=right,
  numberstyle=\tiny,
  stepnumber=2,
  numbersep=4pt
}
\newcommand{\linefigioa}[9]{

}
\lstdefinelanguage{ioaLang}{%
  basicstyle=\ttfamily\small,
  keywordstyle=\rmfamily\bfseries\small,
  identifierstyle=\small,
  keywords={assumes,automaton,axioms,backward,bounds,by,case,choose,components,const,d,det,discrete,do,eff,else,elseif,ensuring,enumeration,evolve,fi,fire,follow,for,forward,from,hidden,if,in,%
    input,initially,internal,invariant,let, local,od,of,output,pre,schedule,signature,so,%
    simulation,states,state,variables, tasks, stop,tasks,that,then,to,trajdef,trajectory,trajectories,transitions,tuple,type,
    uniform,union,urgent,uses,when,where,while,yield},
  literate=
        {\\in}{{$\in$}}1
        {\\preceq}{{$\preceq$}}1
        {\\subset}{{$\subset$}}1
        {\\subseteq}{{$\subseteq$}}1
        {\\supset}{{$\supset$}}1
        {\\supseteq}{{$\supseteq$}}1
        {\\rho}{{$\rho$}}1
        {\\infty}{{$\infty$}}1
        {<}{{$<$}}1
        {>}{{$>$}}1
        {=}{{$=$}}1
        {~}{{$\neg$}}1 
        {|}{{$\mid$}}1
        {'}{{$^\prime$}}1
        {\\A}{{$\forall$}}1 {\\E}{{$\exists$}}1
        {\\/}{{$\vee$}}1 {/\\}{{$\wedge$}}1 
        {=>}{{$\Rightarrow$}}1 
        {->}{{$\rightarrow$}}1 
        {<=}{{$\leq$}}1 {>=}{{$\geq$}}1 {~=}{{$\neq$}}1
        {\\U}{{$\cup$}}1 {\\I}{{$\cap$}}1
        {|-}{{$\vdash$}}1 {-|}{{$\dashv$}}1
        {<<}{{$\ll$}}2 {>>}{{$\gg$}}2
        {||}{{$\|$}}1
        {<=>}{{$\Leftrightarrow$}}2 
        {<->}{{$\leftrightarrow$}}2
        {(+)}{{$\oplus$}}1
        {(-)}{{$\ominus$}}1
}
\lstdefinelanguage{bigIOALang}{%
  basicstyle=\ttfamily,
  keywordstyle=\rmfamily\bfseries,
  identifierstyle=,
  keywords={assumes,automaton,axioms,backward,by,case,choose,components,const,%
    d,det,discrete,do,eff,else,elseif,ensuring,enumeration,evolve,fi,for,forward,from,hidden,if,in%
    input,initially,internal,invariant,local,od,of,output,pre,schedule,signature,so,%
    tasks, simulation,states,stop,tasks,that,then,to,trajdef,trajectories,transitions,tuple,type,union,urgent,uses,when,where,yield},
  literate=
        {\\in}{{$\in$}}1
        {\\preceq}{{$\preceq$}}1
        {\\subset}{{$\subset$}}1
        {\\subseteq}{{$\subseteq$}}1
        {\\supset}{{$\supset$}}1
        {\\supseteq}{{$\supseteq$}}1
        {<}{{$<$}}1
        {>}{{$>$}}1
        {=}{{$=$}}1
        {~}{{$\neg$}}1 
        {|}{{$\mid$}}1
        {'}{{$^\prime$}}1
        {\\A}{{$\forall$}}1 {\\E}{{$\exists$}}1
        {\\/}{{$\vee$}}1 {/\\}{{$\wedge$}}1 
        {=>}{{$\Rightarrow$}}1 
        {->}{{$\rightarrow$}}1 
        {<=}{{$\leq$}}1 {>=}{{$\geq$}}1 {~=}{{$\neq$}}1
        {\\U}{{$\cup$}}1 {\\I}{{$\cap$}}1
        {|-}{{$\vdash$}}1 {-|}{{$\dashv$}}1
        {<<}{{$\ll$}}2 {>>}{{$\gg$}}2
        {||}{{$\|$}}1
        {<=>}{{$\Leftrightarrow$}}2 
        {<->}{{$\leftrightarrow$}}2
        {(+)}{{$\oplus$}}1
        {(-)}{{$\ominus$}}1
}
\endcsname\vspace{-4pt}\noindent}
\newcommand{\true}{\relax\ifmmode \mathit true \else \em true \/\fi}
\newcommand{\false}{\relax\ifmmode \mathit false \else \em false \/\fi}
\newlength{\bracklen}
\newcommand{\tri}[3]{\ensuremath{\mathit{#1}^\mathit{#2}_\mathit{#3}}}
\newcommand{\sugLocalVars}[2]{\ifthenelse{\equal{}{#2}}%
                             {\tri{localVars}{#1}{desug}}%
                             {\tri{localVars}{#1}{#2,desug}}}
\newcommand{\sugVars}[2]{\ifthenelse{\equal{}{#2}}%
                        {\tri{vars}{#1}{desug}}%
                        {\tri{vars}{#1}{#2,desug}}}
\newenvironment{subSyntax}{\begin{array}{l}}{\end{array}}
\newcommand{\ms}[1]{\ifmmode%
\mathord{\mathcode`-="702D\it #1\mathcode`\-="2200}\else%
$\mathord{\mathcode`-="702D\it #1\mathcode`\-="2200}$\fi}
\def\A{{\cal A}} 
\def\T{{\cal T}} 
\lstdefinelanguage{pvs}{
  basicstyle=\tt \figuresize,
  keywordstyle=\sc \figuresize,
  identifierstyle=\it \figuresize,
  emphstyle=\tt \figuresize,
  mathescape=true,
  tabsize=20,
  sensitive=false,
  columns=fullflexible,
  keepspaces=false,
  flexiblecolumns=true,
  basewidth=0.05em,
  moredelim=[il][\rm]{//},
  moredelim=[is][\sf \figuresize]{!}{!},
  moredelim=[is][\bf \figuresize]{*}{*},
  keywords={and, 
         begin,
         cases, const,
         do,
         external, else, exists, end, endcases, endif,
         fi,for, forall, from,
         hidden,
         in, if, importing,
     let, lambda, lemma,
     measure, 
     not,
     or, of,
     return, recursive,
     stop, 
     theory, that,then, type, types, type+, to, theorem,
     var,
     with,while},
  emph={nat, setof, sequence, eq, tuple, map, array, enumeration, bool, real, exp, nnreal, posreal},   
   literate=
        {(}{{$($}}1
        {)}{{$)$}}1
        {\\in}{{$\in\ $}}1
        {\\mapsto}{{$\rightarrow\ $}}1
        {\\preceq}{{$\preceq\ $}}1
        {\\subset}{{$\subset\ $}}1
        {\\subseteq}{{$\subseteq\ $}}1
        {\\supset}{{$\supset\ $}}1
        {\\supseteq}{{$\supseteq\ $}}1
        {\\forall}{{$\forall$}}1
        {\\le}{{$\le\ $}}1
        {\\ge}{{$\ge\ $}}1
        {\\gets}{{$\gets\ $}}1
        {\\cup}{{$\cup\ $}}1
        {\\cap}{{$\cap\ $}}1
        {\\langle}{{$\langle$}}1
        {\\rangle}{{$\rangle$}}1
        {\\exists}{{$\exists\ $}}1
        {\\bot}{{$\bot$}}1
        {\\rip}{{$\rip$}}1
        {\\emptyset}{{$\emptyset$}}1
        {\\notin}{{$\notin\ $}}1
        {\\not\\exists}{{$\not\exists\ $}}1
        {\\ne}{{$\ne\ $}}1
        {\\to}{{$\to\ $}}1
        {\\implies}{{$\implies\ $}}1
        {<}{{$<\ $}}1
        {>}{{$>\ $}}1
        {=}{{$=\ $}}1
        {~}{{$\neg\ $}}1
        {|}{{$\mid$}}1
        {'}{{$^\prime$}}1
        {\\A}{{$\forall\ $}}1
        {\\E}{{$\exists\ $}}1
        {\\/}{{$\vee\,$}}1
        {\\vee}{{$\vee\,$}}1
        {/\\}{{$\wedge\,$}}1
        {\\wedge}{{$\wedge\,$}}1
        {->}{{$\rightarrow\ $}}1
        {=>}{{$\Rightarrow\ $}}1
        {->}{{$\rightarrow\ $}}1
        {<=}{{$\Leftarrow\ $}}1
        {<-}{{$\leftarrow\ $}}1
        {~=}{{$\neq\ $}}1
        {\\U}{{$\cup\ $}}1
        {\\I}{{$\cap\ $}}1
        {|-}{{$\vdash\ $}}1
        {-|}{{$\dashv\ $}}1
        {<<}{{$\ll\ $}}2
        {>>}{{$\gg\ $}}2
        {||}{{$\|$}}1
        {[}{{$[$}}1
        {]}{{$\,]$}}1
        {[[}{{$\langle$}}1
        {]]]}{{$]\rangle$}}1
        {]]}{{$\rangle$}}1
        {<=>}{{$\Leftrightarrow\ $}}2
        {<->}{{$\leftrightarrow\ $}}2
        {(+)}{{$\oplus\ $}}1
        {(-)}{{$\ominus\ $}}1
        {_i}{{$_{i}$}}1
        {_j}{{$_{j}$}}1
        {_{i,j}}{{$_{i,j}$}}3
        {_{j,i}}{{$_{j,i}$}}3
        {_0}{{$_0$}}1
        {_1}{{$_1$}}1
        {_2}{{$_2$}}1
        {_n}{{$_n$}}1
        {_p}{{$_p$}}1
        {_k}{{$_n$}}1
        {-}{{$\ms{-}$}}1
        {@}{{}}0
        {\\delta}{{$\delta$}}1
        {\\R}{{$\R$}}1
        {\\Rplus}{{$\Rplus$}}1
        {\\N}{{$\N$}}1
        {\\times}{{$\times\ $}}1
        {\\tau}{{$\tau$}}1
        {\\alpha}{{$\alpha$}}1
        {\\beta}{{$\beta$}}1
        {\\gamma}{{$\gamma$}}1
        {\\ell}{{$\ell\ $}}1
        {--}{{$-\ $}}1
        {\\TT}{{\hspace{1.5em}}}3        
      }
\lstdefinelanguage{BigPVS}{
  basicstyle=\tt,
  keywordstyle=\sc,
  identifierstyle=\it,
  emphstyle=\tt ,
  mathescape=true,
  tabsize=20,
  sensitive=false,
  columns=fullflexible,
  keepspaces=false,
  flexiblecolumns=true,
  basewidth=0.05em,
  moredelim=[il][\rm]{//},
  moredelim=[is][\sf \figuresize]{!}{!},
  moredelim=[is][\bf \figuresize]{*}{*},
  keywords={and, 
         begin,
         cases, const,
         do, datatype,
         external, else, exists, end, endif, endcases,
         fi,for, forall, from,
         hidden,
         in, if, importing,
     let, lambda, lemma,
     measure,
     not,
     or, of,
     return, recursive,
     stop, 
     theory, that,then, type, types, type+, to, theorem,
     var,
     with,while},
  emph={nat, setof, sequence, eq, tuple, map, array, first, rest, add, enumeration, bool, real, posreal, nnreal},   
   literate=
        {(}{{$($}}1
        {)}{{$)$}}1
        {\\in}{{$\in\ $}}1
        {\\mapsto}{{$\rightarrow\ $}}1
        {\\preceq}{{$\preceq\ $}}1
        {\\subset}{{$\subset\ $}}1
        {\\subseteq}{{$\subseteq\ $}}1
        {\\supset}{{$\supset\ $}}1
        {\\supseteq}{{$\supseteq\ $}}1
        {\\forall}{{$\forall$}}1
        {\\le}{{$\le\ $}}1
        {\\ge}{{$\ge\ $}}1
        {\\gets}{{$\gets\ $}}1
        {\\cup}{{$\cup\ $}}1
        {\\cap}{{$\cap\ $}}1
        {\\langle}{{$\langle$}}1
        {\\rangle}{{$\rangle$}}1
        {\\exists}{{$\exists\ $}}1
        {\\bot}{{$\bot$}}1
        {\\rip}{{$\rip$}}1
        {\\emptyset}{{$\emptyset$}}1
        {\\notin}{{$\notin\ $}}1
        {\\not\\exists}{{$\not\exists\ $}}1
        {\\ne}{{$\ne\ $}}1
        {\\to}{{$\to\ $}}1
        {\\implies}{{$\implies\ $}}1
        {<}{{$<\ $}}1
        {>}{{$>\ $}}1
        {=}{{$=\ $}}1
        {~}{{$\neg\ $}}1
        {|}{{$\mid$}}1
        {'}{{$^\prime$}}1
        {\\A}{{$\forall\ $}}1
        {\\E}{{$\exists\ $}}1
        {\\/}{{$\vee\,$}}1
        {\\vee}{{$\vee\,$}}1
        {/\\}{{$\wedge\,$}}1
        {\\wedge}{{$\wedge\,$}}1
        {->}{{$\rightarrow\ $}}1
        {=>}{{$\Rightarrow\ $}}1
        {->}{{$\rightarrow\ $}}1
        {<=}{{$\Leftarrow\ $}}1
        {<-}{{$\leftarrow\ $}}1
        {~=}{{$\neq\ $}}1
        {\\U}{{$\cup\ $}}1
        {\\I}{{$\cap\ $}}1
        {|-}{{$\vdash\ $}}1
        {-|}{{$\dashv\ $}}1
        {<<}{{$\ll\ $}}2
        {>>}{{$\gg\ $}}2
        {||}{{$\|$}}1
        {[}{{$[$}}1
        {]}{{$\,]$}}1
        {[[}{{$\langle$}}1
        {]]]}{{$]\rangle$}}1
        {]]}{{$\rangle$}}1
        {<=>}{{$\Leftrightarrow\ $}}2
        {<->}{{$\leftrightarrow\ $}}2
        {(+)}{{$\oplus\ $}}1
        {(-)}{{$\ominus\ $}}1
        {_i}{{$_{i}$}}1
        {_j}{{$_{j}$}}1
        {_{i,j}}{{$_{i,j}$}}3
        {_{j,i}}{{$_{j,i}$}}3
        {_0}{{$_0$}}1
        {_1}{{$_1$}}1
        {_2}{{$_2$}}1
        {_n}{{$_n$}}1
        {_p}{{$_p$}}1
        {_k}{{$_n$}}1
        {-}{{$\ms{-}$}}1
        {@}{{}}0
        {\\delta}{{$\delta$}}1
        {\\R}{{$\R$}}1
        {\\Rplus}{{$\Rplus$}}1
        {\\N}{{$\N$}}1
        {\\times}{{$\times\ $}}1
        {\\tau}{{$\tau$}}1
        {\\alpha}{{$\alpha$}}1
        {\\beta}{{$\beta$}}1
        {\\gamma}{{$\gamma$}}1
        {\\ell}{{$\ell\ $}}1
        {--}{{$-\ $}}1
        {\\TT}{{\hspace{1.5em}}}3        
      }
\lstdefinelanguage{pvsNums}[]{pvs}
{
  numbers=left,
  numberstyle=\tiny,
  stepnumber=2,
  numbersep=4pt
}
\lstdefinelanguage{pvsNumsRight}[]{pvs}
{
  numbers=right,
  numberstyle=\tiny,
  stepnumber=2,
  numbersep=4pt
}
\newcommand{\linefigpvs}[9]{

}
\lstdefinelanguage{pvsproof}{
  basicstyle=\tt \figuresize,
  mathescape=true,
  tabsize=4,
  sensitive=false,
  columns=fullflexible,
  keepspaces=false,
  flexiblecolumns=true,
  basewidth=0.05em,
}
\begin{document}
\title{Differentially Private Distributed Optimization}
\author{
Zhenqi Huang \ Sayan Mitra \ Nitin Vaidya \\
\{zhuang25, mitras, nhv\}@illinois.edu\\
Coordinate Science Laboratory\\
 University of Illinois at Urbana Champaign\\
   Urbana, IL 61801 
}

\maketitle
\begin{abstract}
In distributed optimization and iterative consensus literature, a standard problem is for $N$ agents to minimize a function $f$ over a subset of Euclidean space, where the cost function is expressed as a sum $\sum f_i$. In this paper, we study the private  distributed optimization (PDOP) problem with the additional requirement that the cost function of the individual agents should remain differentially private.
The adversary attempts to infer information about the private cost functions from the messages that the agents exchange. Achieving differential privacy requires that any change of an individual's cost function  only results in unsubstantial changes in the statistics of the messages. We propose a class of iterative algorithms for solving PDOP, which achieves differential privacy and convergence to the optimal value. Our analysis reveals the dependence of the achieved accuracy and the privacy levels  on the the parameters of the algorithm.  
We observe that to achieve $\epsilon$-differential privacy the accuracy of the algorithm has the order of $O(\frac{1}{\epsilon^2})$.
\end{abstract}

\section{Introduction}
\label{sec:intro}
We introduce the private distributed optimization problem (PDOP) in which $N$ agents are required to minimize a global cost function $f$ that is the sum $\Sigma_{i=1}^N f_i$ of $N$ cost functions for the individual agents. An instance of the problem arises when $N$ secretive agents (with their own convex travel costs) wish to agree on a rendezvous point in a country such that (a) the travel cost for the entire group is minimized and (b) an adversary reading all the communication between the agents is unable to deduce the cost functions for the individuals.  
%
We study iterative distributed algorithms for solving this problem in which agents  exchange information about their current estimates for the  optimal point  and then update their estimates based on the information received from their neighbors. In doing so, however, the agents must preserve the privacy of their individual cost functions. The agents communicate over a  communication network in which the connectivity may change over time. While iterative solutions for distributed optimization have been explored  previously (see \cite{nedich09,nedich10,Tsitsiklis86}), to our knowledge this paper is the first attempt to achieve this goal while maintaining privacy.

An alternative to distributed iterative optimization is a centralized strategy
wherein a trusty {\em leader} is identified, with its task being to collect cost functions
from all the other agents, perform the optimization centrally, and the
distribute the results to all the other agents. 
While appealing for its simplicity, this strategy requires election of a leader, and maintenance
of routes from all agents to the leader.
The centralized scheme is then vulnerable to failure 
of the leader. Also, there is non-trivial cost of leader election
and route maintenance in time-varying topologies, and in some systems 
learning the network topology itself violates privacy
of the agents. Therefore, there has been significant interest in designing
completely distributed algorithms for network-wide optimization
and consensus. For instance, such algorithms have been designed for the
{\em smart grids}~\cite{dominguez2012sg} and {\em sensor networks}~\cite{consensus07}.

The notion of privacy we adopt is derived from  $\epsilon$-differential privacy~\cite{DiffPri:Dwork06,Dwork:2008:DPS:1791834.1791836,dwork2006our} applied to continuous bit streams in~\cite{Dwork10}.  This $\epsilon$-differential privacy ensures that an adversary with access to all the communication in the system---we call this an  observation sequence---cannot gain any significant information about the  cost function of any agent.

In~\cite{chaudhuri11,adam12private} the authors solve a privacy preserving optimization problem with two methods: output perturbation and objective perturbation. In this problem, the cost functions of the individual are assumed to have a template and the computation is done by an entity that have access to all the agents' individual data. In contrast, we study a class of problems that have to be solved in distributed ways without relying on any template of the individual cost functions.

In this paper, we propose a class of synchronous iterative distributed algorithms for solving PDOP.
Iterative algorithms proceed in round. In each round,  each agent participating in our algorithm executes three subroutines.
First, it adds a vector of  random noise, drawn from Laplace distribution, to its estimate for the optimal point and broadcasts this noisy estimate to its neighboring agents. Sharing noisy estimates enables the agent to protect the privacy of its cost functions. For convergence of the estimates to the optimal point, however, the noise added in successive rounds must decay down to 0. Indeed, in our algorithm the parameters of the successive Laplace distributions are chosen such that they  converge to the Dirac distribution. Next, the agent computes a weighted average over its neighbors' noisy broadcasts based on the communication graph of that round.
Finally, the agent computes a new estimate by moving the average value against the gradient of its own cost function according to a carefully chosen step size. 

A key quantity which determines the amount of noise to be added in each round for achieving differential privacy is the sensitivity of the algorithm. Roughly, the sensitivity at round $t$ is the change in the observable behavior of the system at round $t$, namely the messages exchanged at round $t$, with change in the cost function of any agent (see Definition~\ref{def:sens}). 
For differential privacy, the ratio of the sensitivity and the parameter for the Laplace noise must be small (see Lemma~\ref{lem:lmc}). 
For the estimate of the optimal point to get arbitrarily close to the optimal point, standard iterative algorithms for distributed optimization (for example, the ones discussed in~\cite{nedich10}), require the sum of the step sizes to be infinite. 
This strategy, however, would increase  the sensitivity of the system for later rounds. That is, an adversary could begin to infer significant information about the individual cost functions. 
%
Thus, unlike the standard algorithms and our previous algorithm for private consensus~\cite{wpes}, our algorithm for PDOP uses step sizes that sum to a finite quantity.
Assuming that the domain is bounded, we then establish convergence and both the level of differential privacy and the accuracy of the algorithm (Theorems~\ref{thm:dp} and~\ref{thm:acc}).

The algorithm has four parameters: the privacy level, the initial step size, the step size decay rate and the noise decay rate.  
Our analysis reveals that the accuracy level $d$ has the oder of the inverse-square of the privacy level  $\epsilon$.
\section{Preliminaries}
\label{sec:prelim}

The algorithms presented in this paper rely on random real numbers drawn according to the Laplace distribution.
For a constant $c>0$, $Lap(c)$ denotes the Laplace distribution with probability density function $p_c(x) \deq \frac{1}{2c}e^{-\frac{|x|}{c}}$. This distribution has mean zero and variance $2c^2$.  
For any $x,y\in\reals$, it can be shown that $\frac{p_c(x)}{p_c(y)}\leq e^{\frac{|y-x|}{c}}$.

For a natural number $N \in \naturals$, we denote the set $\{1, \ldots, N\}$ by $[N]$. 
For a  vector $v$ of length $n$,  the $i^{th}$ component is denoted by $v_i$. 
The transpose of $v$ is denoted by $v^T$.
For a vector $v$ in $\reals^n$ and $1\leq p\leq \infty$, $||v||_p$  stands for the standard $L^p$-norm for $v$. 
Without a subscript, $||\cdot||$ stands for $L^2$-norm.  That is, $||v||=\sqrt{v^Tv}$.  
For any vector $v\in\reals^n$, the inequality $||v||_2\leq ||v||_1\leq \sqrt n ||v||_2$ holds.

An {\em Euclidean projection} of a point $x\in \reals^n$ onto a set $\X\subseteq \reals^n$ is a point in $\X$ that is closest to 
$x$ measured by Euclidean norm. If there are multiple candidate points, one is chosen arbitrarily, and to reduce notational overhead we treat the Euclidean projection $\pj \X(x)$ as a function of $x$ and $\X$. 
That is, $y = \pj \X(x)$ if $y\in \X$ and $||y-x||\leq ||z-x||$ for any $z\in\reals^n \setminus \X$. 
A well known property of projection is that it does not increase the distance between points. That is,
$||\pj \X(x)-\pj \X(y)||\leq ||x-y||$ for any $x,y\in \reals^n$.

A differentiable function $f:\X\mapsto \reals$ is convex if for any $x,y\in \X$, $\grad f(x)^T(y-x)\leq f(y)-f(x)$. Moreover, if there exists a positive constant $c>0$ such that $\grad f(x)^T(y-x)\leq f(y)-f(x)-\frac{c}{2}||y-x||^2$, the function $f$ is said to be {\em strongly convex\/}. 
Strongly convex functions on compact domains have a unique minima~\cite{nesterov2007gradient}. 
For example, for constant $a\in \reals^n$, the quadratic  function $f(x) = ||x-a||^2$ is a strongly convex function in $\reals^n$, while the linear function $f(x) = x-a$ is convex but not strongly convex. 

For a constant $\beta\in (0,1)$ and a convergent scalar sequence  $\{a_t\}_{t\in\naturals}$, the limit  $\lim_{t\rightarrow \infty} \sum_{s=1}^t \beta^{t-s}a_s$ exists.
\begin{prop}
\label{lem:convlim}
For a constant $\beta\in(0,1)$ and a convergent scalar sequence  $\{a_t\}_{t\in\naturals}$ such that $\lim_{t\rightarrow \infty}a_t=0$, the following holds:
\begin{equation}
\label{eq:convlim}
\lim_{t\rightarrow \infty} \sum_{s=1}^t \beta^{t-s}a_s = 0.
\end{equation}
\end{prop}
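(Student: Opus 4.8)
The plan is to use the standard "split the tail" argument for Cesàro/convolution-type limits. Since $\{a_t\}$ is convergent it is in particular bounded, so fix $A>0$ with $|a_s|\le A$ for all $s\in\naturals$. Fix an arbitrary $\varepsilon>0$. Because $a_t\to 0$, there is an index $M$ such that $|a_s|<\varepsilon$ for every $s>M$. Now for any $t>M$ decompose
\[
\left|\sum_{s=1}^t \beta^{t-s}a_s\right|
\ \le\ \sum_{s=1}^{M}\beta^{t-s}|a_s|\ +\ \sum_{s=M+1}^{t}\beta^{t-s}|a_s|.
\]

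First I would bound the second (tail) sum: using $|a_s|<\varepsilon$ there and reindexing $k=t-s$, it is at most $\varepsilon\sum_{k=0}^{t-M-1}\beta^{k}\le \varepsilon\sum_{k=0}^{\infty}\beta^k = \frac{\varepsilon}{1-\beta}$, a bound uniform in $t$. For the first (head) sum, factor out $\beta^{t-M}$: since each of the $M$ terms has $|a_s|\le A$ and $\beta^{t-s}=\beta^{t-M}\beta^{M-s}\le \beta^{t-M}\cdot\frac{1}{1-\beta}$ (as $\beta^{M-s}\le 1$ for $s\le M$, and more crudely $\sum_{s=1}^M\beta^{M-s}\le\frac1{1-\beta}$), the head sum is at most $\frac{A}{1-\beta}\,\beta^{t-M}$. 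Because $\beta\in(0,1)$, $\beta^{t-M}\to 0$ as $t\to\infty$, so there is $T\ge M$ with $\frac{A}{1-\beta}\beta^{t-M}<\varepsilon$ for all $t\ge T$.

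Combining, for all $t\ge T$ we get $\left|\sum_{s=1}^t\beta^{t-s}a_s\right| < \varepsilon + \frac{\varepsilon}{1-\beta} = \varepsilon\bigl(1+\tfrac{1}{1-\beta}\bigr)$. Since $\varepsilon>0$ was arbitrary and the constant $1+\tfrac1{1-\beta}$ does not depend on $\varepsilon$, this shows $\sum_{s=1}^t\beta^{t-s}a_s\to 0$, which is \eqref{eq:convlim}.

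There is no real obstacle here — the only thing to be careful about is that $M$ is chosen first (depending only on $\varepsilon$) and then $T$ is chosen afterward (depending on $M$, $A$, $\beta$, $\varepsilon$), so that the head sum genuinely has a fixed number of terms when we let $t\to\infty$; conflating the two choices is the typical pitfall. One could alternatively invoke the fact that the convolution of an $\ell^1$ sequence $(\beta^k)$ with a null sequence is null, but the elementary splitting above is self-contained and matches the level of the paper.
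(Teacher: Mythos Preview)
Your proof is correct and follows essentially the same ``split the tail'' argument as the paper: bound the head sum using boundedness of $\{a_t\}$ and the decay of $\beta^{t-M}$, and bound the tail sum using $|a_s|<\varepsilon$ together with the geometric series. The only cosmetic difference is that the paper tunes the choice of $N_1$ (your $M$) to land on exactly $\epsilon$ rather than $\varepsilon\bigl(1+\tfrac{1}{1-\beta}\bigr)$, which is immaterial.
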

\noindent
Proof of Proposition~\ref{lem:convlim} can be found in appendix.
\section{The Private Distributed Optimization Problem}
\label{sec:problem}

A {\em Private Distributed Optimization\/} (PDOP) problem $\P$ for $N$ agents is specified by four parameters: 
\begin{enumerate}[(i)]
\item $\X\subseteq \reals^n$ is the domain of optimization, 
\label{it:pdopn}
\item $\F\subseteq \reals^\X$ is a set of real-valued, strongly convex and differentiable individual cost  functions on domain $\X$,
\item $f:\X\mapsto \reals$ is the global cost function which is a sum of $N$ cost functions in $\F$, that is, $f(x)\deq \sum f_i(x)$ with $f_i\in \F$ for each $i\in[N]$, and
\item $\A =\{A_t\}_{t\in\naturals}$ is a sequence of $N\times N$ matrices which specify the time-varying communication graph.
\end{enumerate}
More details on these parameters and additional assumptions we use for solving PDOP will be stated in Section~\ref{sec:dog}. 
In Section~\ref{sec:alg}, we introduce the class of algorithms we study in this paper. 
In Section~\ref{sec:dpca}, we formally state the requirements for solving PDOP. 

We describe the problem $\P$ as follows.
The system consists of $N$ agents. Each agent $i\in [N]$ is associated with an individual cost function $f_i: \X \mapsto \reals$.
The individual cost $f_i$ is only known to agent $i$.
Together the agents aim to minimize:
\begin{eqnarray}
f(x) = \sum_{i\in [N]}f_i(x),
\end{eqnarray}
subject to the constraint $x\in \X$.
We define $f^*_\P\deq \min_{x\in \X}f(x)$ as the global minimum for $f$ 
and $x^*_\P\deq \argmin_{x\in \X}f(x)$ as the point in $\X$ that minimizes the cost function.  
For a PDOP $\P$ we denote its components and related quantities by $\X_\P, \F_\P, f_\P, \A_\P, f^*_\P$ and $x^*_\P$. 
We drop the subscript when it is clear from context. 
For a pair of PDOPs $\P$ and $\P'$, we will also denote the corresponding quantities by $\X, \F$, $\ldots$, and $\X'$, $\F'$, etc.  
To illustrate the idea, we present a private rendezvous problem in Example~\ref{ex:problem}.
\begin{example}
\label{ex:problem}
\hili{
A number of $N$ agents live in a compact region $\X$ in a 2-D plane, where the address of each agent $i$ is a point $x_i\in \X \subseteq\reals^2$.   
The agents wants to decide an assembly point $x\in \reals^2$ without sharing their actual address. 
The cost of agent $i$ to go to point $x$ is the squared distance $f_i(x) = ||x_i-x||^2$.
Moreover, each agent can only keep in touch with a subset of the other agents.
Then, we can cast this problem as a PDOP, where
(i) the domain of optimization is $\X \subseteq \reals$, 
(ii) the set of objective functions $\F=\{f| a\in\X, f(x)=||a-x||^2\}$, (iii) the global cost function $f(x)=\sum f_i(x)$,
and (iv) $\A$ is a sequence of matrices specify the possibly time-varying communication topology.
}
\end{example}


\subsection{Domain, Cost Function and Communication Graph}
\label{sec:dog}

We make the following assumptions on the domain of optimization and the set of individual cost functions throughout the paper.
\begin{assumption}[Convexity and compactness]
\label{ass:function}
         
\begin{enumerate}[(i)]
\item The set $\X$ is compact and convex. Let $C_1 \deq \sup_{x,y\in\X}||x-y||$ denote the diameter of $\X$.
\item The gradients of all the individual cost function are uniformly bounded.
That is, there exists $C_2>0$ such that for any $x\in \X$ and any $g\in\F$, $||\grad g(x)||\leq C_2$.
\item The functions in $\F$ are strongly convex. That is, there exists $C_3>0$ such that for any $x,y\in\X$ and for any 
$g\in \F$,  $\grad f(x)^T(y-x)\leq f(y)-f(x)-\frac{C_3}{2}||y-x||^2$. 
\end{enumerate}
\end{assumption}
The first part of Assumption~\ref{ass:function} is standard which guarantees the existence of an optimal solution.
The second part  holds if the magnitude of the gradients of individual cost functions do not grow unbounded.
In many optimization problems, it is standard to assume the cost function to be convex, for which gradient based method is effective. Strong convexity provides a stronger bound on the gradient term $\grad f(x)^T(y-x)$ which is necessary of analyzing the accuracy of our algorithm.  It can be checked that the PDOP introduced in Example~\ref{ex:problem} satisfies Assumption~\ref{ass:function}.

We assume a synchronous model of distributed computation though the communication network among the agents is time varying. 
We model the communication network at round $t$ as a weighted graph $\G_t=(\V,\E_t,\W_t)$, where 
\begin{inparaenum}[(i)]
\item $\V=[N]$ is the set of agents,
\item $\E_t \subseteq \V\times \V $ is the set of edges over which information is exchanged at round $t\in \naturals$, and 
\item $\W_t:\E_t\mapsto (0,1]$ is the weighted function labels each edge with a positive weight.
\end{inparaenum}
The graph $\G_t$ is represented by an $N\times N$ matrix $A_t$, where
the entry $a_{i,j}(t) \deq \W_t(i,j)$ if $(i,j)\in \E_t$ otherwise $a_{i,j}(t):=0$.
We assume that the matrix $A_t$ is doubly stochastic. 
That is, for each $i\in[N]$, $\sum_{j\in[N]}a_{i,j}(t)  = 1$ and for each $j\in[N]$, $\sum_{i\in[N]}a_{i,j}(t)=1$. 
Roughly, a doubly stochastic $A_t$ ensures that each agent's decision has an equal influence on the final decision. This statement will become clearer after we introduce the algorithm. 
There are existing  distributed algorithms to derive a  doubly stochastic matrix among a network (see e.g.~\cite{dominguezmatrix}).
We use the following technical assumptions of the time-varying communication network $\A$ throughout the paper.
\begin{assumption}[Robust connectivity]
\label{ass:graphrobust}
We assume that for each $t\in\naturals$, the graph $A_t$ is strongly connected. In addition, there exists a minimal connection strength $\eta\in(0,1]$ 
such that for each $t\in\naturals$:
\begin{enumerate}[(i)]
\item $a_{i,i}(t) \geq \eta$ for each $i\in[N]$. And
\item $a_{i,j}(t) > 0$ implies that $a_{i,j}(t)\geq \eta$.
\end{enumerate}
\end{assumption}
This assumption guarantees that there exists a path in the graph linking each pair of the agents and the sum of weights along the path is lower bounded. 

\subsection{Iterative Distributed Algorithms for PDOP}
\label{sec:alg}
We study an class of iterative distributed algorithms for solving PDOP. 
As shown in Algorithm~\ref{alg:dop}, $R$, $U$ and $F$ are functions or subroutines, 
which when instantiated will give candidate algorithms. 
The constant $T$ is the total number of rounds over which the algorithm is executed and it determines the accuracy of the final answer. 
The agents have internal states. 
An agent's state is defined by the valuations of individual variables. 
Each agent has four internal variables, which are
\begin{inparaenum}[(i)]
\item $x_i\in\X$ is agent $i$'s current estimate of the optimal point; it is initialized to an arbitrary point $x_{i0}$ in $\X$,
\item $y_i\in \reals^n$ is the value agent $i$ broadcasts to other agents,
\item $z_i\in \reals^n$ is the value agent $i$ computes based on  the values it receives from its 	neighbors,
\item $t\in \naturals$ is the current round number, and
\item $\mathit{buffer}$ is an ordered set which stores the messages received by agent $i$ in a given round from its neighbors. 
\end{inparaenum}

\begin{algorithm}[h]                      
\caption{Template for iterative solution of PDOP.}          
\label{alg:dop}                           
\begin{algorithmic}[1]	                    
\STATE \bf{Input:} $f_i,\X,\A$
\STATE $x_i \gets x_{i0}$\;  
\FOR{$t = 1: T$}   
	\STATE $y_i \gets R(x_i,t)$\;
	\STATE Broadcast$(y_i)$\;
	\STATE $\mathit{buffer}_i \gets$ Receive$()$\;
	\STATE $z_i\gets F(A_t,\mathit{buffer}_i)$ \;
          \STATE $x_i \gets U(z_i,t,f_i,\X)$\;
\ENDFOR 
\RETURN $R$
\end{algorithmic}
\end{algorithm}

Message exchanging between agents is assumed to be atomic.
That is, the {\bf Receive}($y_i$) routine of agent $i$ broadcasts $y_i$ to all his neighbors and the {\bf Receive}() routine receives all neighbors' broadcasts immediately.
This can be implemented by underlying message exchanging protocols.
In each round $t\in\naturals$, the algorithm has four phases: 
\begin{inparaenum}[(i)]
\item each agent executes a subroutine $R$ to compute the value to report ($y_i$) based on his individual value ($x_i$) (line 4),
\item each agent broadcasts its value ($y_i$) and receives all neighbors' reports (line 5-6),
\item each agent executes a subroutine $F$ to compute a aggregate value ($z_i$) based on its neighbors' messages (line 7), and
\item each agent executes a subroutine $U$ to compute a new individual value ($x_i$) that reduces the individual cost function $f_i$ (line 8).
\end{inparaenum}
%
%

We denote $x_i(t)$ as the valuation of  $x_i$ at the end of round $t$.
We denote the aggregate state $x(t)$ as a vector of the $N$ individual valuations: $x(t)\deq [x_1(t),\dots,x_N(t)]$.
$y_i(t)$, $z_i(t)$,  $y(t)$ and $z(t)$ are similarly defined as valuations of individual variables and vectors  at the end of round $t$.
Each round of  a iterative distributed algorithm transforms the state vector of the entire system to a new state vector. 
An {\em execution\/} of such an algorithm for a given PDOP, is a possibly infinite sequence of the form 
$\alpha=x(0),\langle x(1),y(1),z(1),\mathit{buffer}(1)\rangle,$ $\langle x(2),y(2),z(2),$ $\mathit{buffer}(2)\rangle, \dots$. 
The observable part of such an execution are the corresponding infinite sequence of messages $y(1), y(2), \ldots$.
We denote the observation mapping $\R(\alpha)\deq y(1),y(2),\dots$ which gives the sequence of messages exchanged for the execution $\alpha$..

Note that the set of messages stored in $\mathit{buffer}_i(t)$ is uniquely specified by the vector $y(t)$ and the communication graph for the round $A_t$. 
Thus, for deterministic subroutines $R$, $F$, and $U$, and particular choices of the initial valuations of the variables, and a given PDOP $\P$ an iterative distributed algorithm has a unique execution. 
For fixed (possibly randomized) subroutines $U, R, F$, and a fixed initial state $x(0)$, let $\mathit{Obs}$ denote the set of all sequences of messages that the resulting algorithm can produce for any PDOP problem\footnote{Here we are suppressing the dependence of $Obs$ on $U, R, F$ and $x(0)$ for notational convenience.}. 

In this paper, we will study randomized versions of Algorithm~\ref{alg:dop}.
For a fixed choice of these randomized subroutines (to be stated in Section~\ref{sec:alg}), $\Xi_\P$ denotes the set of all executions of the resulting algorithm for a given PDOP $\P$ and a given set of initial conditions\footnote{Here we are suppressing the dependence of $\Xi_\P$ and $\pr_\P$ on $U, R, F$ and $x(0)$ for notational convenience.}. 
The probability measure over the space of executions $\pr_\P$ is defined in the standard way by first defining a $\sigma$-algebra of cones over the space of executions, and then by defining the probability of the cones by integrating over $\mu$ (see for example~\cite{segala1996modeling,Mitra07PhD}).   

\subsection{Convergence, Accuracy and Differential Privacy}
\label{sec:dpca}

An iterative distributed algorithm solves the PDOP problem if the estimates of all the agents converge to a common value and the algorithm preserves differential privacy of the $f_i$'s. Furthermore, we want this convergence point close to the optima $x^*$ of $f$. 
\begin{definition}[Convergence]
\label{def:con}
An iterative distributed algorithm converges  if for any PDOP $\P$ and any initial configuration, for any agents $i,j\in [N]$,
\[
\lim_{t\rightarrow \infty} \ep||(x_i(t)-x_j(t)||=0, 
\]
where the expectation is taken over the coin-flips of the algorithm, that is, the randomization in the $R$, $F$ and $U$ subroutines of the individual agents.
\end{definition}


We define $\bar x(t) \deq \frac{1}{N}\sum_{i\in[N]}x_i(t)$ as the average of the  individual agent estimates at the end of round $t$.  We define the accuracy of the algorithm by the expected squared distance of the average to the optima $x^*$.
\begin{definition}[Accuracy]
\label{def:acc}
For a $d\geq 0$, an iterative distributed algorithm is said to be $d$-accurate if,
\begin{equation}
\label{eq:defacc}
\lim_{t\rightarrow \infty}\ep||\bar x(t) - x^*||^2\leq d,
\end{equation}
where the expectation is taken over the coin-flips of the algorithm.
\end{definition}
The smaller the $d$-accuracy, the more accurate the algorithm.
If the algorithm  converges to the exact global optimal point $x^*$, then it is $0$-accurate.

Our definition of privacy is a modification of the notion of {\em differential privacy\/} introduced in~\cite{Dwork10} in the context of streaming algorithms.
We consider an adversary with full access to all the communication channels. 
That is, he can peek inside all the messages ($y(t)$) going back and forth between the agents. 
%
%
%
%
%
For a give PDOP $\P$, observation sequence of messages $\rho \in \mathit{Obs}$, and an initial state $x(0)$, then 
$\R^{-1}(\P,\rho,x(0))$ is the set of executions $\{\alpha \in \Xi_\P : \R(\alpha) = \rho \wedge \alpha(0)=x(0) \}$ that can generate the observation $\rho$. 
\begin{definition}[Adjacency]
\label{def:adj}
Two PDOPs $\P$ and $\P'$ are {\em adjacent},  if the following holds: 
\begin{enumerate}[(i)]
\item $\X=\X'$, $\F=\F'$ and $\A=\A'$, that is, the domain of optimization, the set of individual cost functions and the communication graphs are identical, and 
\item there exists an $i \in [N]$, such that $f_i \neq f'_i$ and for all $j \neq i$, $f_j = f'_j$. 
\end{enumerate}
\end{definition}
%
That is, two PDOP are adjacent if only one agent changed its individual cost function while all other parameters are identical.
\begin{definition}
\label{def:dp}
For an $\epsilon\geq 0$, the iterative distributed algorithm  is $\epsilon$-{\em differentially private\/}, if for any two adjacent PDOPs $\P$ and $\P'$, any set of observation sequences $Y\subseteq \mathit{Obs}$ and any initial state $x(0)\in \X^N$
\begin{align}
\label{eq:dpdef}
\pr[\R^{-1}(\P,Y,x(0))] \leq e^\epsilon \pr[\R^{-1}(\P',Y,x(0))],
\end{align}
where the expectation is taken over the coin-flips of the algorithm.
\end{definition}
Roughly, the notion of $\epsilon$-differential privacy ensures that an adversary with access to all the  observation sequence cannot gain information about the individual cost function of any agent with any significant probability. The probability measure is over the $\sigma$-algebra of of cones over the set of executions. 
A smaller $\epsilon$ suggests a higher privacy level.
In the rest of the paper, we discuss algorithms to solve PDOP which guarantee $\epsilon$-differential privacy and $d$-accuracy.
\section{An Algorithm for PDOP instantiating the Subroutines}
\label{sec:mechanism}
\subsection{Algorithm Description}
\label{sec:alg}
In Section~\ref{sec:alg}, we introduced a class of iterative distributed algorithms in terms of the subroutines $R$, $F$ and $U$.
In this section, we instantiate the subroutines and analyze the convergence, accuracy and differential privacy of the resulting algorithm.  The algorithm has 4 parameters: (i) the privacy parameter $\epsilon>0$, (ii) the initial step size parameter $c>0$, 
(iii) the step size decay rate $q\in(0,1)$ and (iv) the noise decay rate $p\in(q,1)$.

The subroutine $R$, shown in Algorithm~\ref{alg:r}, computes a value to broadcast based on agent $i$'s local value $x_i$ at  round $t$.  The agent first generates a vector of $n$ (which is the length of $x_i$) noise values drawn  independently from the Laplace distribution $Lap(M_t)$ with parameter $M_t$, which we will define later. 
The value to report is the sum of $x_i$ and the noise vector $w_i$. 

\begin{algorithm}[h]                      
\caption{Subroutine $R$}          
\label{alg:r}                           
\begin{algorithmic}[1]	                    
\STATE {\bf Input:} $x_i\in \X, t\in \naturals$
\STATE $w_i \sim Lap(M_t)$\;      
\STATE $y_i = x_i + w_i$\;
\RETURN $y_i$
\end{algorithmic}
\end{algorithm}

The subroutine $F$, shown in Algorithm~\ref{alg:f}, computes a value $z_i$ based on all the neighbors'  messages. 
In this subroutine, the agent $i$ first read its neighbors' broadcasts from its own buffer.
Recall that $a_{i,j}(t)$ is the entry on the $i^{th}$ column and $j^{th}$ row of  a doubly stochastic matrix $A_t$.
Thus, the value $z_i= \sum_{j\in [N]} a_{ij}(t)y_j$ is indeed the weighted average of neighbors' broadcast based on the communication graph $A_t$.

\begin{algorithm}[h]                      
\caption{Subroutine $F$}          
\label{alg:f}                           
\begin{algorithmic}[1]	                    
\STATE {\bf Input:} $A_t, \mathit{buffer}_i$
\STATE For all $j\in[N]$, $y_j\gets read(\mathit{buffer}_i,j)$\;
\STATE $z_i= \sum_{j\in [N]} a_{ij}(t)y_j$\;      
\RETURN $z_i$
\end{algorithmic}
\end{algorithm}

The subroutine $U$ is shown in Algorithm~\ref{alg:u}. 
In this subroutine, the agent computes a new local value $x_i$ by moving from $z_i$ against the gradient of $f_i$. 
Roughly, this computation reduces the individual cost function $f_i$ from the point $z_i$. 
The parameter $\gamma_t$ is the step size at round $t$, which we will define later in Equation~\eqref{eq:para}. The projection $\pj \X$ guarantees that the estimate for agent $i$ is in $\X$.

\begin{algorithm}[h]                      
\caption{Subroutine $U$}          
\label{alg:u}                           
\begin{algorithmic}[1]	                    
\STATE {\bf Input:} $z_i\in \X, t \in\naturals,f_i, \X$   
\STATE $x_i \gets \pj \X [z_i-\gamma_t(\grad f_i(z_i))]$\;
\RETURN $x_i$
\end{algorithmic}
\end{algorithm}

From Algorithm~\ref{alg:r}-\ref{alg:u}, we can write down the computation of each agent $i$ at round $t$ as following three equations: 
\begin{eqnarray}
\label{eq:mech1}
y_i(t) &=& x_i(t-1)+w_i(t) \\
\label{eq:mech2}
z_i(t) &=& \sum_{j\in [N]} a_{ij}(t)y_j(t). \\
\label{eq:mech3} 
x_i(t) &=& \pj \X[ z_i(t)-\gamma_t(\grad f_i(z_i(t)))] .
\end{eqnarray}
In Equations~\eqref{eq:mech1}-\eqref{eq:mech3}, there are two undefined parameters: the noise parameter $M_t$ and the step size $\gamma_t$ at round $t$. 
We propose to choose $M_t$ and $\gamma_t$ as geometrically decaying sequences depends on 4 parameters ($c,q,p,\epsilon$):
\begin{equation}
\label{eq:para}
\begin{array}{rl}
\gamma_t&=cq^{t-1}\\
M_t &=2C_2\sqrt n \frac{c p}{\epsilon(p-q)} p^{t-1}. 
\end{array} 
\end{equation}
where $c>0$ is the initial step size parameter, $q\in(0,1)$ is the step size decay parameter, $p\in(q,1)$ is the noise decay parameter and $\epsilon>0$ is the privacy parameter.
The initial noise ($2C_2\sqrt n \frac{c}{\epsilon(p-q)}$)  depends on the three parameters $c,q,p,\epsilon$, the dimension of domain $n$ (See~\ref{it:pdopn} in Definition of PDOP) and constant $C_2$ from Assumption~\ref{ass:function}. 
Note that the noise distribution converges to Dirac distribution and the step size converges to 0. 

Thus, the algorithm we introduced to solve the PDOP (Algorithm~\ref{alg:r}-\ref{alg:u}) has three tunable parameters: the initial noise parameter $c$, its decaying rate $q$ and step size decay rate $p$.
Later we show that by any choice of $c>0$, $q\in(0,1)$ and $p\in(q,1)$, the iterative distributed algorithm is $\epsilon$-differentially private, convergent, and ensures certain level of accuracy.
In Section~\ref{sec:acc} we will discuss a tradeoff between convergence rate and accuracy.

\subsection{Differential Privacy}
\label{sec:dp}

Recall in Definition~\ref{def:adj}, two PDOP $\P$ and $\P'$ are adjacent if only one term of $f'$  is different from that of $f$. 
The notion of sensitivity of a mechanism captures the maximum change in the states ($x(t)$) for two adjacent PDOPs.
Recall that in Section~\ref{sec:dpca}, we introduced a inverse observation mapping $\R^{-1}(\P,\rho,x(0))$ that maps a PDOP $\P$, an observation sequence $\rho$ and an initial configuration $x(0)$ to a set of executions, each of which is in the form $\alpha=x(0),\langle x(1),y(1),z(1),$ $\mathit{buffer}(1),\rangle, \langle x(2),y(2),z(2),\mathit{buffer}(2)\rangle,\dots$. Let $\R_{x(t)}^{-1}(\P,\rho,x(0))$ be the set of $x(t)$ component from each of the executions $\alpha$ in the set the set of the executions $\alpha \in \R^{-1}(\P,\rho,x(0))$.
\begin{definition}
\label{def:sens}
At each round $t\in\naturals$, for any observation sequence $\rho\in \mathit{Obs}$, any initial state $x(0) \in \X^N$ and any adjacent PDOPs $\P,\P'$,
We define the sensitivity of Algorithm~\ref{alg:dop} as
\[
\Delta(t) \deq \sup_{ \substack{x\in\R_{x(t)}^{-1}(\P,\rho,x(0))\\ x'\in\R_{x(t)}^{-1}(\P',\rho,x(0))}} ||x-x'||_1,
\]
where the norm used is $L^1$-norm. 
\end{definition}
We will show that $\Delta(t)$ is bounded for any $t \in \naturals$ for the algorithm. We state the following lemma which is a sufficient condition on the amount of noise to guarantees $\epsilon$-differential privacy.
\begin{lemma}
\label{lem:lmc}
At each round $t\in \naturals$, if each agent adds a noise vector $\omega_i(t)$  consisting $n$ Laplace noise independently drawn from $Lap(M_t)$ such that $\sum_{t=1}^\infty\frac{\Delta(t)}{M_t}\leq \epsilon$,
then the iterative distributed algorithm is $\epsilon$-differentially private.
\end{lemma}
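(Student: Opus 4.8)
The plan is to reduce the claim to the standard Laplace-mechanism privacy bound applied round by round, and then accumulate the per-round guarantees using the hypothesis $\sum_{t=1}^\infty \Delta(t)/M_t \le \epsilon$. First I would fix two adjacent PDOPs $\P$ and $\P'$, an initial state $x(0)$, and a measurable set of observation sequences $Y \subseteq \mathit{Obs}$. Since the observation sequence $\rho = y(1), y(2), \dots$ is the only thing the adversary sees, and since by Definition~\ref{def:sens} the set $\R^{-1}_{x(t)}(\P,\rho,x(0))$ of possible internal states at round $t$ has $L^1$-diameter at most $\Delta(t)$ across the two adjacent problems, the key observation is that at round $t$ the message $y_i(t) = x_i(t-1) + w_i(t)$ is obtained from the internal state $x(t-1)$ by adding independent $Lap(M_t)$ noise in each of the $n N$ coordinates. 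Writing the density of the full noise vector as a product of scalar Laplace densities and using the pointwise bound $p_c(x)/p_c(y) \le e^{|y-x|/c}$ recalled in the preliminaries, the likelihood ratio of producing the observed $y(t)$ from a state $x(t-1)$ versus from a state $x'(t-1)$ (with $\rho$ fixed up to round $t$) is bounded by $\exp\!\big(\|x(t-1) - x'(t-1)\|_1 / M_t\big) \le \exp(\Delta(t)/M_t)$.

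Next I would formalize the telescoping across rounds. I would express the probability of $\R^{-1}(\P,Y,x(0))$ as an integral/sum over observation sequences $\rho \in Y$ of the probability that the algorithm run on $\P$ from $x(0)$ produces $\rho$; conditioning on the prefix $y(1),\dots,y(t-1)$, the conditional density of $y(t)$ factors as described above. Doing this for every round $t = 1,2,\dots$ and multiplying the per-round likelihood-ratio bounds gives an overall ratio bounded by $\prod_{t=1}^\infty \exp(\Delta(t)/M_t) = \exp\!\big(\sum_{t=1}^\infty \Delta(t)/M_t\big) \le e^\epsilon$, which is exactly the inequality \eqref{eq:dpdef} required by Definition~\ref{def:dp}. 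The role of the subroutines $F$ and $U$ here is benign: once the noisy messages $y(t)$ are fixed, the state $z(t)$ is a deterministic function of $y(t)$ and $A_t$, and $x(t)$ is a deterministic function of $z(t)$, $\gamma_t$, and $f_i$; so the only randomness that matters for the observation is the Laplace noise injected in subroutine $R$, and adjacency only perturbs the (invisible) deterministic update, with the perturbation controlled by $\Delta(t)$.

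The main obstacle I expect is making the infinite-round argument rigorous at the measure-theoretic level: one must justify interchanging the product over rounds with the integration over the cone $\sigma$-algebra, and handle the fact that the noise parameters $M_t \to 0$ (so the Laplace densities become increasingly concentrated) without the per-round factors misbehaving — this is precisely why the summability hypothesis $\sum \Delta(t)/M_t \le \epsilon$ is imposed, and the convergence of that series is what guarantees the infinite product is finite. A clean way to handle this is to first prove the statement for the truncated $T$-round algorithm (a finite product of Laplace mechanisms, where everything is elementary), obtaining the bound $e^{\sum_{t=1}^T \Delta(t)/M_t}$ uniformly in $T$, and then pass to the limit $T \to \infty$ using that the observation $\sigma$-algebra is generated by the finite-prefix cones and that $\sum_{t=1}^T \Delta(t)/M_t \uparrow \sum_{t=1}^\infty \Delta(t)/M_t \le \epsilon$. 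A secondary technical point is that $\Delta(t)$ as defined is a supremum over all $\rho$ and $x(0)$, so the per-round bound $\exp(\Delta(t)/M_t)$ holds uniformly regardless of the conditioning prefix, which is what lets the telescoping go through cleanly.
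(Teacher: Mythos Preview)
Your proposal is correct and follows essentially the same route as the paper: reduce to a product of Laplace density ratios (using that $F$ and $U$ are deterministic once the noisy messages are fixed---the paper packages this as the uniqueness Proposition~\ref{lem:uniqueinverse} and an explicit bijection/change of variables, which is equivalent to your conditioning formulation), bound each per-round ratio by $e^{\Delta(t)/M_t}$ via the Laplace property and Definition~\ref{def:sens}, and then multiply across rounds. Your added finite-$T$ truncation and limit argument is a welcome bit of measure-theoretic care that the paper glosses over.
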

\begin{proof}
Fix any pair of adjacent PDOP $\P$ and $\P'$, any set of observation sequence $Obs$ and any initial state $x(0)\in \X$.
For simplicity, we denote the sets of executions $\R^{-1}(\P,Obs,x(0))$ and $\R^{-1}(\P',Obs,x(0))$ by $A$ and $A'$ respectively.
First we introduce a proposition of the uniqueness of the mapping $\R^{-1}$.  The proof of Proposition~\ref{lem:uniqueinverse} can be found in appendix.
\begin{prop}
\label{lem:uniqueinverse}
For any PDOP $\P$, any observation sequence $\rho\in Obs$, for any initial state $x(0)\in \X^N$, $\R^{-1}(\P,\rho,x(0))$ is a singleton set.
\end{prop}
We define a correspondence $B$ between the sets $A$ and $A'$. 
For $\alpha \in A$ and $\alpha' \in A'$, $B(\alpha)=\alpha'$ if and only if they have the same observation sequence. That is $\R(\alpha)= \R(\alpha')$.
Fix any observation sequence $\rho$ in $Obs$, there is an unique execution $\alpha \in A$ that can produce the observation. Similarly, $\alpha'$ is also unique in $A'$.
So $B$ is indeed a bijection.
we relate the probability measures of the sets of executions $A$ and $A'$. 
\begin{equation}
\label{eq:l1dp}
\frac{\pr[\R^{-1}(f,Obs,x(0))]}{\pr[\R^{-1}(f',Obs,x(0))]} = \frac{\int_{\alpha\in A}\pr[\alpha] d \mu}{\int_{\alpha'\in A'}\pr[\alpha' ] d \mu'}.
\end{equation}
Changing the variable using the bijection $B$ we have,
\begin{equation}
\label{eq:l1cv}
\int_{\alpha'\in A'}\pr[\alpha' ] d \mu' = \int_{B(\alpha)\in A'}\pr[B(\alpha) ] d \mu = \int_{\alpha\in A}\pr[B(\alpha) ] d \mu
\end{equation}
From Algorithm~\ref{alg:r}-\ref{alg:u}, recall that we fixed the  observation sequence $\rho$, the probability comes from the noise $w_i(t)$. 
Along the sequence $\xi$, $x_i(t)$ is a vector of length $n$.  We denote the $k$ state component of $x_i(t)$ by $x_i^{(k)}(t)$. From Algorithm~\ref{alg:r},  $y_i(t)$ is obtained by adding $n$ independent noise to $x(t)$, from the distribution $Lap(M_t)$, it follows that the probability density of an execution $\alpha$ is reduced to
\begin{equation}
\label{eq:l1pdf}
\int_{\alpha\in A}\pr[\alpha ] d \mu= \prod_{ \substack{i\in [N], k\in [n] \\ t\in\naturals}}
p_{M_t}(y_i^{(k)}(t)- x_i^{(k)}(t)),
\end{equation}
where $p_b(x)$ is the probability density function of $Lap(b)$ at $x$. 
Then, we relate the distance at time $t$ between the state of $\alpha$ and $B(\alpha)$ with the sensitivity $\Delta(t)$. 
By the Definition~\ref{def:sens}, we have
\[
|| x(t)-x'(t)||_1 \leq\Delta(t).
\]
The norm in above equation is $L^1$-norm. The global state $x(t)$ consists of $N$ local state $x_i(t)$, each of which has $n$ component. 
So $( x(t)-x'(t))$ lives in space $\reals^{nN}$. By definition of $L^1$-norm:
\[ 
\begin{array}{rl}
\sum_{i=1}^N\sum_{k=1}^n | x_i^{(k)}(t)-x_i'^{(k)}(t)| = || x_i(t)-x'_i(t)||_1 \leq \Delta(t).
\end{array}
\]
Recall that by definition of $B$, the observations of $\alpha$ and $B(\alpha)$ match, that is $ y(t) = y'(t)$. 
From the property of Laplace distribution introduced in Section~\ref{sec:prelim},
\begin{equation}
\label{eq:l1lap}
\begin{split}
&\prod_{i\in[N] k\in[n]} \frac{p_{M_t}(y_i^{(k)}(t)- x_i^{(k)}(t))}{p_{M_t}(y_i'^{(k)}(t)- x_i'^{(k)}(t))} \\
\leq& \prod_{i\in[N],k\in[n]} exp\left({\frac{|y_i^{(k)}(t)- x_i^{(k)}(t) - y_i'^{(k)}(t)+ x_i'^{(k)}(t)|}{M_t}} \right) \\
=& \prod_{i\in[N],k\in[n]} exp \left({\frac{|x_i^{(k)}(t) - x_i'^{(k)}(t)|}{M_t}} \right) \\
=& exp\left({\sum_{i\in[N],k\in[n]}\frac{|x(\alpha(t)) - x(B(\alpha)(t))|}{M_t} }\right) 
\leq  e^{\frac{\Delta(t)}{M_t}}. \\
\end{split}
\end{equation}
Combining Equation~\eqref{eq:l1dp}, \eqref{eq:l1cv}, \eqref{eq:l1pdf} and \eqref{eq:l1lap}, we derive
\begin{equation*}
\begin{split}
&\frac{\pr[\R^{-1}(f,Obs,x(0))]}{\pr[\R^{-1}(f',Obs,x(0))]} 
=  \frac{\int_{\alpha\in A}\pr[\alpha] d \mu}{\int_{\alpha\in A}\pr[B(\alpha)] d \mu} 
\leq  \prod_{t\in \naturals} e^{\frac{\Delta(t)}{M_t}} \\
\leq& e^{\sum_{t\in\naturals}\frac{\Delta(t)}{M_t}}
\end{split}
\end{equation*}
If $M_t$ satisfy $\sum_{t=0}^\infty\frac{\Delta_D(t)}{M_t}\leq \epsilon$, then $\prod_{t\in\naturals}e^{\frac{\Delta(t)}{M_t}} \leq e^\epsilon$.
Thus the lemma follows.
\end{proof}

Lemma~\ref{lem:lmc} states that by adding Laplace noises drawn independently from some Laplace distribution, the iterative distributed algorithm defined by Algorithm~\ref{alg:r}-\ref{alg:u} guarantees $\epsilon$-differential privacy.
The parameters of the noise to add depends on the sensitivity of the algorithm. 
In the next lemma, we state a bound of the sensitivity of our proposed algorithm.
\begin{lemma}
\label{lem:sensitivity}
Under Assumption~\ref{ass:function}, 
the sensitivity of the proposed algorithm is
\[
\Delta(t) =2C_2\sqrt n \gamma_t.
\]
\end{lemma}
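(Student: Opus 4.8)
The plan is to fix two adjacent PDOPs $\P$ and $\P'$; by Definition~\ref{def:adj} they share $\X,\F,\A$ and differ only in the cost function of a single agent $i_0$, with $f_{i_0}\neq f'_{i_0}$ and $f_j=f'_j$ for all $j\neq i_0$. Fix also an arbitrary observation sequence $\rho\in\mathit{Obs}$ and initial state $x(0)\in\X^N$. By Proposition~\ref{lem:uniqueinverse}, $\R^{-1}(\P,\rho,x(0))$ and $\R^{-1}(\P',\rho,x(0))$ are singletons, so there are unique executions $\alpha$ (for $\P$) and $\alpha'$ (for $\P'$) realizing $\rho$ from $x(0)$. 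It then suffices to bound $||x(t)-x'(t)||_1$ for these two executions and afterwards take the supremum over all admissible $\rho, x(0), \P, \P'$.

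The key observation is that the aggregation step~\eqref{eq:mech2} produces $z_i(t)=\sum_{j\in[N]}a_{ij}(t)y_j(t)$, which is a function of the message vector $y(t)$ — a prefix of the fixed observation $\rho$ — and of the matrix $A_t$, which is common to $\P$ and $\P'$ since adjacency forces $\A=\A'$. Hence $z_i(t)=z'_i(t)$ for every $i\in[N]$ and every $t\in\naturals$, with no induction needed. Combining this with the update~\eqref{eq:mech3}: for any agent $i\neq i_0$ we have $f_i=f'_i$, so $x_i(t)=\pj\X[z_i(t)-\gamma_t\grad f_i(z_i(t))]=x'_i(t)$, and therefore $x(t)$ and $x'(t)$ differ only in the block belonging to agent $i_0$. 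For that block I would invoke the non-expansiveness of the Euclidean projection (recalled in Section~\ref{sec:prelim}), the triangle inequality, and the uniform gradient bound of Assumption~\ref{ass:function}(ii):
\[
||x_{i_0}(t)-x'_{i_0}(t)|| \;\leq\; \gamma_t\,||\grad f_{i_0}(z_{i_0}(t))-\grad f'_{i_0}(z_{i_0}(t))|| \;\leq\; \gamma_t\bigl(||\grad f_{i_0}(z_{i_0}(t))||+||\grad f'_{i_0}(z_{i_0}(t))||\bigr) \;\leq\; 2C_2\gamma_t .
\]
Since only the $n$ coordinates of the $i_0$ block are nonzero, converting from $L^2$ to $L^1$ via $||v||_1\leq\sqrt n\,||v||_2$ gives $||x(t)-x'(t)||_1=||x_{i_0}(t)-x'_{i_0}(t)||_1\leq 2C_2\sqrt n\,\gamma_t$, and taking the supremum yields $\Delta(t)\leq 2C_2\sqrt n\,\gamma_t$.

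For the reverse inequality (so that equality holds), I would exhibit a family of adjacent PDOPs with $\X$ large, together with an observation sequence that forces $z_{i_0}(t)$ to a point well inside $\X$ at which $\grad f_{i_0}$ and $\grad f'_{i_0}$ are nearly antipodal, each of norm approaching $C_2$ and nearly parallel to the all-ones direction; then no projection clipping occurs, the inequality $||v||_1\leq\sqrt n\,||v||_2$ is nearly saturated, and $||x_{i_0}(t)-x'_{i_0}(t)||_1\to 2C_2\sqrt n\,\gamma_t$.

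The steps I expect to require the most care are: (i) the observation that $z_i(t)$ depends only on $\rho$ and $\A$, which is precisely what lets the two executions be compared round by round without an inductive argument; (ii) the application of Assumption~\ref{ass:function}(ii) at $z_{i_0}(t)$ — since $z_{i_0}(t)$ is a convex combination of the noisy broadcasts $y_j(t)$ it need not lie in $\X$, so one must either invoke the stated precondition $z_i\in\X$ of subroutine $U$ or read the constant $C_2$ as a global bound on the gradients of functions in $\F$ wherever evaluated; and (iii) the extremal construction for tightness, which must simultaneously respect strong convexity, the $C_2$ gradient bound, and saturation of the $L^1/L^2$ inequality. Only the upper bound is used downstream (via Lemma~\ref{lem:lmc} in the choice of $M_t$ in~\eqref{eq:para}), so the tightness direction could be stated more loosely if desired.
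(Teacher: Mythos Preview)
Your argument is correct and follows essentially the same route as the paper: fix adjacent problems and a common observation sequence, observe that $z(t)=z'(t)$ because it depends only on $\rho$ and $\A$, conclude that $x(t)$ and $x'(t)$ differ only in the $i_0$-block, bound that block via the gradient bound, and pass from $L^2$ to $L^1$. You are in fact more careful than the paper's own proof---you explicitly invoke non-expansiveness of the projection, you flag the domain issue for $z_{i_0}(t)$, and you address the tightness direction, whereas the paper only establishes the inequality $\Delta(t)\leq 2C_2\sqrt n\,\gamma_t$ despite stating the lemma with an equality sign.
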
  
\begin{proof}
Fix any observation sequence $\rho$, any initial state $x(0)\in\X^N$ and any adjacent $\P,\P'$.  
Let 
$\R^{-1}(\P,\rho,x(0)) = x(0), \langle x(1),$ $y(1),z(1),buffer(1) \rangle,\dots$ 
and 
$\R^{-1}(\P',\rho,x(0))=x'(0),\langle x'(1),$ $y'(1),z'(1),buffer'(1)\rangle,\dots$ be the executions for PDOP $\P$ and $\P'$ respectively.

By fixing the observation sequence $\rho$ for both executions,  we have $y(t)=y'(t)$ for all $t$. 
From Algorithm~\ref{alg:f},  $z_i(t) = \sum_{j\in [N]} a_{ij}(t)y_j(t) = \sum_{j\in [N]} a_{ij}(t)y'_j(t) = z'_i(t)$ for each $i\in[N]$ and each round $t$.
From Definition~\ref{def:adj},   $f$ and $f'$ are identical except for the $i^{th}$ components.
Thus, by applying Algorithm~\ref{alg:u}, we have:
\begin{eqnarray*}
&&||\R_{x(t)}^{-1}(\P,\rho,x(0))-\R_{x(t)}^{-1}(\P',\rho,x(0))||_1 \\
& =& ||z_i(t)-\gamma_t(\grad f_i(z_i(t))) - z'_i(t)+\gamma_t(\grad f'_i(z'_i(t)))||_1\\
& =&\gamma_t||\grad f_i(z_i(t)))-\grad f'_i(z'_i(t)))||_1
\end{eqnarray*}
From Assumption~\ref{ass:function}, the $L^2$ norm $||\grad f_i(z_i(t)))-\grad f'_i(z_i(t)))||\leq 2C_2$. 
By the norm inequality introduced in Section~\ref{sec:prelim}, we have,
\begin{eqnarray*}
\Delta(t) &=&\sup_{ \substack{x\in\R_{x(t)}^{-1}(\P,\rho,x(0))\\ x'\in\R_{x(t)}^{-1}(\P',\rho,x(0))}} \gamma_t||\grad f_i(z_i(t)))-\grad f'_i(z_i(t)))||_1\\
&\leq& 2C_2\sqrt n \gamma_t. 
\end{eqnarray*}
\end{proof}
With Lemma~\ref{lem:lmc} and~\ref{lem:sensitivity}, it directly follows that our algorithm guarantees $\epsilon$-differential privacy.

\begin{theorem}
\label{thm:dp}
The proposed algorithm (Algorithm~\ref{alg:dop}-\ref{alg:u}) guarantees $\epsilon$-differential privacy with any choice of $c>0$, $q\in(0,1)$ and $p\in(q,1)$. 
\end{theorem}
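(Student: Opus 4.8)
The plan is to combine the two lemmas just established. Lemma~\ref{lem:lmc} reduces the claim of $\epsilon$-differential privacy to verifying the single inequality $\sum_{t=1}^\infty \Delta(t)/M_t \leq \epsilon$, while Lemma~\ref{lem:sensitivity} supplies the closed form $\Delta(t) = 2C_2\sqrt{n}\,\gamma_t$. So the proof collapses to a direct computation: show that, with the schedules $\gamma_t$ and $M_t$ prescribed in Equation~\eqref{eq:para}, this series evaluates to exactly $\epsilon$, hence is $\leq \epsilon$.

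First I would substitute the definitions. Since $\gamma_t = cq^{t-1}$ and $M_t = 2C_2\sqrt{n}\tfrac{cp}{\epsilon(p-q)}p^{t-1}$, the factors $2C_2\sqrt{n}$ and $c$ cancel in the ratio $\Delta(t)/M_t$, leaving
\[
\frac{\Delta(t)}{M_t} = \frac{\epsilon(p-q)}{p}\left(\frac{q}{p}\right)^{t-1}.
\]
Second, I would sum the resulting geometric series. The parameter constraint $p \in (q,1)$ forces $q/p \in (0,1)$, so the series converges and
\[
\sum_{t=1}^\infty \frac{\Delta(t)}{M_t} = \frac{\epsilon(p-q)}{p}\sum_{t=1}^\infty \left(\frac{q}{p}\right)^{t-1} = \frac{\epsilon(p-q)}{p}\cdot\frac{1}{1-q/p} = \frac{\epsilon(p-q)}{p}\cdot\frac{p}{p-q} = \epsilon.
\]
Applying Lemma~\ref{lem:lmc} with this (in fact tight) bound then yields $\epsilon$-differential privacy, and since $c>0$, $q\in(0,1)$, $p\in(q,1)$ were arbitrary, the theorem follows.

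Before invoking Lemma~\ref{lem:lmc} I would record two small sanity checks so the argument is clean: that $\Delta(t)$ is finite for every $t$ (immediate from Lemma~\ref{lem:sensitivity}, since $C_2<\infty$ by Assumption~\ref{ass:function}(ii) and $\gamma_t<\infty$), and that $M_t>0$ so the ratio is well-defined — which holds precisely because $p>q$ makes $p-q>0$. There is no genuine obstacle here: the noise schedule $M_t$ in Equation~\eqref{eq:para} was reverse-engineered from exactly this summability requirement, so the only subtlety worth flagging is the dual role of the hypothesis $p\in(q,1)$ — it is what makes the defining series simultaneously well-posed (positivity of $M_t$) and summable (ratio $q/p<1$) — whereas the freedom in $c>0$ and $q\in(0,1)$ contributes nothing further, since $c$ cancels and $q$ enters only through $q/p$.
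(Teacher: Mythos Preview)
Your proposal is correct and follows essentially the same approach as the paper's proof: both invoke Lemma~\ref{lem:sensitivity} to bound $\Delta(t)$ by $2C_2\sqrt{n}\,\gamma_t$, substitute the schedules from Equation~\eqref{eq:para}, sum the resulting geometric series with ratio $q/p<1$ to obtain exactly $\epsilon$, and then appeal to Lemma~\ref{lem:lmc}. Your added sanity checks on the finiteness of $\Delta(t)$ and positivity of $M_t$ are not in the paper but are harmless clarifications.
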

\begin{proof}
Recall that in Equation~\eqref{eq:para}, the step size at round $t$ is $\gamma_t=cq^{t-1}$. Besides  the Laplace noise at round $t$ is drawn from distribution $Lap(M_t)$ with $M_t=2C_2\sqrt n \frac{c p}{\epsilon (p-q)}p^{t-1}$.
Then, from Lemma~\ref{lem:sensitivity}, we have
\[
\Delta(t) \leq 2C_2\sqrt n \gamma_t = 2C_2\sqrt n c q^{t-1}.
\]
Then, from $p\in(q,1)$, we have:
\[
\sum_{t=1}^\infty \frac{\Delta(t)}{M_t} 
\leq \frac{\epsilon(p-q)}{p}  \sum_{t=1}^\infty \left(\frac{q}{p}\right)^{t-1} 
= \frac{\epsilon(p-q)}{p}  \frac{p}{p-q} =\epsilon.
\] 
From Lemma~\ref{lem:lmc}, the algorithm guarantees $\epsilon$-differential privacy 
\end{proof}

\subsection{Convergence}
\label{sec:con}
In this section, we prove that the algorithm converge.
We define the transfer matrix $\Phi(k,s) = \prod_{t=s+1}^k A(t)$, which captures the evolution of states under a sequence of communication graph $\{A_t\}_{s+1}^k$.
We denote $\Phi(k,s)_{i,j}$ as the entry of $\Phi(k,s)$ on the $i^{th}$ row and $j^{th}$ column.
The following lemma (Lemma 3.2 of~\cite{nedich09}) states that $\Phi(k,s)$ converges to a constant matrix as $k\rightarrow \infty$. Moreover, the convergence rate depends on: (i) the number of agents $N$, and (ii) the robust connectivity parameter $\eta$ given in Assumption~\ref{ass:graphrobust}.
\begin{lemma}
\label{lem:matconv}
Under Assumption~\ref{ass:graphrobust}, 
there exist constants $\theta>0$ and $\beta\in(0,1)$ such that for any $i,j\in[N]$ for any naturals $t>s$,
 \[
 |\Phi(t,s)_{i,j} -\frac{1}{N}|\leq \theta \beta^{t-s},
 \]
 where $\theta = \left( 1 - \frac{\eta}{4N^2}\right)^{-2}$ and $\beta =  1 - \frac{\eta}{4N^2}$.
\end{lemma}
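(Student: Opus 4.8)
The statement is a standard ergodicity fact for backward products of doubly stochastic matrices from a ``well-connected'' family, so my plan is to reproduce the coefficient-of-ergodicity argument underlying Lemma 3.2 of~\cite{nedich09}. I would begin with the structural reductions. Since each $A_t$ is doubly stochastic, so is every finite product $\Phi(t,s)=A(t)A(t-1)\cdots A(s+1)$; hence all its entries lie in $[0,1]$ and every column sums to $1$. Averaging the $j$-th column gives $\frac{1}{N}\sum_i \Phi(t,s)_{i,j}=\frac{1}{N}$, so $\min_k \Phi(t,s)_{k,j}\le \frac{1}{N}\le \max_k \Phi(t,s)_{k,j}$ and therefore
\[
\Bigl|\Phi(t,s)_{i,j}-\tfrac{1}{N}\Bigr|\ \le\ \max_k \Phi(t,s)_{k,j}-\min_k \Phi(t,s)_{k,j}\ =:\ \mathrm{sp}_j\bigl(\Phi(t,s)\bigr).
\]
Thus it suffices to show the ``spread'' of each column decays geometrically in $t-s$.

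Second, I would establish a uniform positive lower bound on the entries of $\Phi$ over windows of a fixed length $L=L(N)$: there is $\delta=\delta(\eta,N)>0$ with $\Phi(s+L,s)_{i,j}\ge\delta$ for every $s$ and all $i,j$. This is where Assumption~\ref{ass:graphrobust} enters: each $G_t$ is strongly connected and $a_{i,i}(t)\ge\eta$, so the set of agents whose round-$s$ value can influence agent $i$ by round $s+r$ is nondecreasing in $r$ (the self-loops let an agent ``hold'' its value) and strictly grows while it is a proper subset of $[N]$, hence equals $[N]$ within $N-1$ rounds; tracing such an influence path and padding it with self-loops lower-bounds every entry of $\Phi(s+N-1,s)$. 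A crude count gives $\delta=\eta^{N-1}$; the much sharper, polynomial-in-$N$ bound needed for the constants in the statement instead exploits double stochasticity through a mass-transfer/averaging estimate (the smallest positive entry of a product cannot stay too small), which is exactly the content of Lemma~3.2 in~\cite{nedich09}.

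Third is the contraction step. For a row-stochastic $Q$ with all entries $\ge\delta$, the Dobrushin coefficient $\tau(Q)=1-\min_{i,k}\sum_j\min(Q_{i,j},Q_{k,j})$ satisfies $\tau(Q)\le 1-N\delta$, is submultiplicative, and obeys $\mathrm{sp}(Qv)\le \tau(Q)\,\mathrm{sp}(v)$ for every vector $v$, with $\mathrm{sp}(v)=\max v-\min v$. Writing $t-s=mL+r$ with $0\le r<L$, I would split $\Phi(t,s)$ into $m-1$ consecutive blocks of length $L$ (each with entries $\ge\delta$, by Step~2) times one leftover factor of length $L+r$; the $j$-th column of $\Phi(t,s)$ is then those $m-1$ blocks applied to the $j$-th column of the leftover factor, whose spread is $\le 1$. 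Hence, for $t-s\ge L$,
\[
\Bigl|\Phi(t,s)_{i,j}-\tfrac{1}{N}\Bigr|\ \le\ \mathrm{sp}_j\bigl(\Phi(t,s)\bigr)\ \le\ (1-N\delta)^{m-1}\ \le\ (1-N\delta)^{-2}\Bigl((1-N\delta)^{1/L}\Bigr)^{t-s},
\]
and for $t-s<L$ the claim is trivial since $\theta=(1-N\delta)^{-2}>1$ while the left side is $\le 1$. This is a bound of exactly the claimed shape with $\theta=(1-N\delta)^{-2}$ and $\beta=(1-N\delta)^{1/L}$; inserting the sharp $\delta$ and $L$ from~\cite{nedich09} yields the explicit values $\theta=(1-\tfrac{\eta}{4N^2})^{-2}$ and $\beta=1-\tfrac{\eta}{4N^2}$.

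The main obstacle is Step~2: the elementary reachability argument only produces the exponentially small $\delta=\eta^{N-1}$, whereas the stated constants require $1-\beta$ (equivalently $\delta$) to be polynomial in $1/N$, and closing that gap genuinely uses the double stochasticity of the $A_t$'s rather than mere connectivity. Everything else—double stochasticity of the products, the reduction to column spreads, the Dobrushin contraction together with its submultiplicativity, and the block decomposition for arbitrary $t-s$—is routine bookkeeping once that lower bound is in hand.
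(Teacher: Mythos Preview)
The paper does not give its own proof of this lemma at all: it simply quotes the result as ``Lemma~3.2 of~\cite{nedich09}'' and uses it as a black box. So there is nothing in the paper to compare your argument against; your proposal is in fact a reconstruction of the proof of the cited result rather than an alternative to anything the authors wrote.

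That said, your outline is the standard and correct route: reduce $|\Phi(t,s)_{i,j}-\tfrac1N|$ to the column spread via double stochasticity, obtain a uniform entrywise lower bound on block products from strong connectivity plus self-loops, and contract the spread using the Dobrushin coefficient with a block decomposition. You are also right that the only nontrivial point is sharpening the naive $\delta=\eta^{N-1}$ to something giving $1-\beta$ polynomial in $1/N$; that sharpening is precisely what the argument in~\cite{nedich09} supplies, and since the paper defers the whole lemma to that reference, deferring this one step is entirely in the spirit of the paper's treatment.
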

\noindent
Lemma~\ref{lem:matconv} states a fundamental restriction on the rate of convergence given a communication topology. 
We can observe from the above lemma that: as the number of agents ($N$) grows or the robustness of communication ($\eta$) decreases,  $\beta$ becomes closer to 1, that is, the transition matrix ($\Phi$) converges slower.

Recall in Algorithm~\ref{alg:f}, agent $j$ influences agent $i$'s computation through the entry $a_{i,j}(t)$ of the communication graph $A_t$. 
Lemma~\ref{lem:matconv} states that any two agents $j$ and $k$ has the same longterm influence on agent $i$'s local state. 
As a direct result from this lemma, any two entries of $\Phi(t,s)$ converge to each other geometrically. That is, for any $i,j,k,l\in [N]$, 
$|\Phi(t,s)_{i,j}-\Phi(t,s)_{k,l}|\leq 2\theta \beta^{t-s}$.
For the algorithm defined by Algorithm~\ref{alg:r}-\ref{alg:u}, we compute the distance between any two local state
using the previous lemma. 

\begin{lemma}
\label{lem:con}
Under Assumptions~\ref{ass:function} and~\ref{ass:graphrobust}, for the proposed iterative distributed algorithm, 
for any agents $i,j\in[N]$ and any time $t\in\naturals$, the following holds:
\begin{equation}
\label{eq:lemcon}
\begin{array}{rl}
||x_i(t)-x_j(t)||\leq& M_1\beta^t+M_2\sum_{s=1}^t\beta^{t-s} \gamma_s \\
& \ + M_3\sum_{s=1}^t\beta^{t-s+1} ||w_k(s)||,
\end{array}
\end{equation}
where $\beta\in(0,1)$ is defined in Lemma~\ref{lem:matconv} and $M_1,M_2,M_3>0$ are bounded constants depends on the constants $C_1,C_2,C_3$ introduced in Assumption~\ref{ass:function}.
\end{lemma}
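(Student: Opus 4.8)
The plan is to turn the per-agent update \eqref{eq:mech1}--\eqref{eq:mech3} into a single affine recursion on the stacked vector $x(t)=[x_1(t),\dots,x_N(t)]$ driven by the communication matrix $A(t)$, unroll it with the transfer matrices $\Phi(t,s)$, and then apply the mixing bound of Lemma~\ref{lem:matconv} term by term. Substituting \eqref{eq:mech1} into \eqref{eq:mech2} gives $z_i(t)=\sum_j a_{ij}(t)x_j(t-1)+\sum_j a_{ij}(t)w_j(t)$, so the weighted average splits as $z_i(t)=\bar z_i(t)+\nu_i(t)$, where $\bar z_i(t)=\sum_j a_{ij}(t)x_j(t-1)\in\X$ (a convex combination of points of the convex set $\X$) and $\nu_i(t)=(A(t)w(t))_i$. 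Writing the projection step \eqref{eq:mech3} via its residual as $x_i(t)=z_i(t)-\gamma_t\grad f_i(z_i(t))+r_i(t)$ and setting $\rho_i(t):=x_i(t)-z_i(t)$, one obtains the recursion $x(t)=A(t)\,x(t-1)+A(t)\,w(t)+\rho(t)$. Using that $z_i(t)\in\X$ (the input signature of subroutine $U$), nonexpansiveness of $\pj\X$, and the uniform gradient bound $\|\grad f_i\|\le C_2$ from Assumption~\ref{ass:function}(ii), I would show $\|\rho_i(t)\|=\|\pj\X(z_i(t)-\gamma_t\grad f_i(z_i(t)))-\pj\X(z_i(t))\|\le\gamma_t C_2$, i.e.\ the descent part is $O(\gamma_t)$ and the noise enters the recursion purely through the $A(t)w(t)$ term.

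Next I would unroll, with $\Phi(t,s)=A(t)\cdots A(s+1)$ and $\Phi(t,t)=I$:
\[
x_i(t)=\sum_j\Phi(t,0)_{ij}x_j(0)+\sum_{s=1}^{t}\sum_j\Phi(t,s)_{ij}\bigl((A(s)w(s))_j+\rho_j(s)\bigr).
\]
The key simplification is the telescoping identity $\Phi(t,s)A(s)=\Phi(t,s-1)$, which rewrites the noise contribution as $\sum_{s=1}^{t}\sum_j\Phi(t,s-1)_{ij}w_j(s)$; this is precisely why the noise term in \eqref{eq:lemcon} carries the extra factor $\beta^{t-s+1}$ instead of $\beta^{t-s}$. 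Subtracting the analogous expansion for agent $j$ and using row-stochasticity of each $\Phi$ (so $\sum_k\Phi(t,s)_{ik}=1$, which lets me replace $x_k(0)$ by $x_k(0)-x_1(0)$ for a fixed reference point), $x_i(t)-x_j(t)$ becomes a sum of three double sums whose coefficients all have the form $\Phi(t,r)_{ik}-\Phi(t,r)_{lk}$.

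Then Lemma~\ref{lem:matconv} finishes it: for $r<t$, $|\Phi(t,r)_{ik}-\Phi(t,r)_{lk}|\le|\Phi(t,r)_{ik}-\tfrac1N|+|\tfrac1N-\Phi(t,r)_{lk}|\le 2\theta\beta^{t-r}$, and the $r=t$ case ($\Phi(t,t)=I$) is absorbed since $2\le 2\theta$; note the noise sum only ever needs $r=s-1<t$. Consequently the initial-condition term is $\le 2N\theta C_1\beta^{t}=:M_1\beta^{t}$ (using $\|x_k(0)-x_1(0)\|\le C_1$ from Assumption~\ref{ass:function}(i)); the residual term is $\le\sum_{s=1}^{t}2N\theta\,(\max_i\|\rho_i(s)\|)\,\beta^{t-s}\le 2N\theta C_2\sum_{s=1}^{t}\beta^{t-s}\gamma_s=:M_2\sum_{s=1}^{t}\beta^{t-s}\gamma_s$; and the noise term is $\le\sum_{s=1}^{t}2N\theta\,\|w(s)\|\,\beta^{t-s+1}=:M_3\sum_{s=1}^{t}\beta^{t-s+1}\|w(s)\|$, where $\|w(s)\|$ is read as $\max_k\|w_k(s)\|$ (the bare index $k$ in the statement should be a max or a sum over agents — it only rescales $M_3$). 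Adding the three contributions yields \eqref{eq:lemcon}, with $M_1,M_2,M_3$ depending only on $N,\eta,C_1,C_2$ ($C_3$ is not actually needed here; it enters only in the accuracy bound).

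I expect the main obstacle to be the careful handling of the projection step rather than the mixing argument. Unlike the standard projected-subgradient consensus analysis, the Laplace noise can push $z_i(t)$ far from the intended region, so one must lean on the convexity of $\X$ and the split $z_i(t)=\bar z_i(t)+\nu_i(t)$ to argue that the projection residual $\rho_i(t)$ stays $O(\gamma_t C_2)$ and does not itself inject a slowly decaying ($\beta^{t-s}$) noise term — this is exactly what keeps the noise entering the recursion as $A(s)w(s)$ and hence picking up the telescoped factor $\beta^{t-s+1}$. Once that decomposition is nailed down, the remaining estimates are routine applications of Lemma~\ref{lem:matconv}, row-stochasticity, and the triangle inequality.
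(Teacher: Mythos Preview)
Your approach is essentially identical to the paper's: your residual $\rho_i(t)$ is exactly the paper's $u_i(t):=\pj\X[z_i(t)-\gamma_t\grad f_i(z_i(t))]-z_i(t)$, and the paper unrolls the same recursion $x(t)=A(t)x(t-1)+A(t)w(t)+u(t)$, applies Lemma~\ref{lem:matconv} in the same way (with the same telescoping $\Phi(t,s)A(s)=\Phi(t,s-1)$ for the noise term), and bounds $\|u_i(t)\|\le C_2\gamma_t$ via the projection property under the standing assumption $z_i(t)\in\X$ from the input signature of $U$, arriving at $M_1=2N\theta\sup_{x\in\X}\|x\|$, $M_2=2NC_2\theta$, $M_3=2N\theta$. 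Your observation that $C_3$ plays no role here is correct, and the paper does not revisit the issue you flag about $z_i(t)$ possibly leaving~$\X$.
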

The proof can be found in appendix.
The above lemma bounds the distance between two agents' local states by three terms. The first term $M_1\beta^t$ decays to 0 as $t$ goes to infinity. The limits of the later two terms can be derived using Proposition~\ref{lem:convlim}. 
This lemma suggests that the limit of Equation~\eqref{eq:convbnd} depends on the limit of the noise magnitude as well as the limit of the step size. 
With Lemma~\ref{lem:con} and Proposition~\ref{lem:convlim}, the convergence of Algorithm described in Section~\ref{sec:alg} follows directly.

\begin{theorem}
\label{thm:con}
The algorithm described in Section~\ref{sec:alg} converges.
\end{theorem}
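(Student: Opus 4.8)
The plan is to obtain Theorem~\ref{thm:con} as an essentially immediate consequence of Lemma~\ref{lem:con} together with Proposition~\ref{lem:convlim}. Fix a PDOP $\P$, an arbitrary initial configuration, and two agents $i,j \in [N]$; the goal is to show $\lim_{t\to\infty}\ep\|x_i(t)-x_j(t)\| = 0$, which is exactly Definition~\ref{def:con}. Starting from the bound of Lemma~\ref{lem:con},
\[
\|x_i(t)-x_j(t)\| \leq M_1\beta^t + M_2\sum_{s=1}^t \beta^{t-s}\gamma_s + M_3\beta\sum_{s=1}^t \beta^{t-s}\|w_k(s)\|,
\]
I would take expectations of both sides. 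Since each summand on the right is nonnegative and integrable and the sums are finite for each fixed $t$, expectation passes through term by term, giving
\[
\ep\|x_i(t)-x_j(t)\| \leq M_1\beta^t + M_2\sum_{s=1}^t \beta^{t-s}\gamma_s + M_3\beta\sum_{s=1}^t \beta^{t-s}\,\ep\|w_k(s)\|.
\]
It then remains to check that each of the three terms vanishes in the limit.

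The first term tends to $0$ because $\beta \in (0,1)$. For the second term, $\gamma_s = cq^{s-1}$ with $q \in (0,1)$ is a convergent sequence with limit $0$, so Proposition~\ref{lem:convlim} (applied with $a_s = \gamma_s$) gives $\sum_{s=1}^t \beta^{t-s}\gamma_s \to 0$. For the third term, the only extra ingredient needed is a bound on $\ep\|w_k(s)\|$: the noise vector $w_k(s)$ has $n$ independent coordinates each drawn from $Lap(M_s)$, which has mean $0$ and variance $2M_s^2$, so $\ep\|w_k(s)\|^2 = 2nM_s^2$ and hence $\ep\|w_k(s)\| \leq \sqrt{2n}\,M_s$ by Jensen's inequality. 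Since $M_s = 2C_2\sqrt{n}\,\frac{cp}{\epsilon(p-q)}\,p^{s-1}$ and $p \in (q,1) \subset (0,1)$, the sequence $\ep\|w_k(s)\|$ is convergent with limit $0$, so a second application of Proposition~\ref{lem:convlim} (with $a_s = \ep\|w_k(s)\|$) yields $\sum_{s=1}^t \beta^{t-s}\,\ep\|w_k(s)\| \to 0$. Combining the three bounds, $\ep\|x_i(t)-x_j(t)\| \to 0$ for every $i,j$.

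I do not expect a genuine obstacle here, since Lemma~\ref{lem:con} does all the work; the only points requiring a little care are the interchange of expectation with the finite sums (legitimate by nonnegativity) and the computation of $\ep\|w_k(s)\|$ from the second moment of a Laplace random variable. If one wanted to be more careful about the meaning of the index $k$ in Lemma~\ref{lem:con}, one would simply note that the argument above works verbatim for whichever fixed agent index the lemma produces, or, if it is to be read with a maximum over $k$, that it suffices to bound $\max_k \ep\|w_k(s)\| \leq \sqrt{2n}\,M_s$, which is handled identically.
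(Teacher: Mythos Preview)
Your proof is correct and follows essentially the same approach as the paper: take expectations in the bound of Lemma~\ref{lem:con} and then apply Proposition~\ref{lem:convlim} to each of the two sums, using that $\gamma_t\to 0$ and $\ep\|w_k(t)\|\to 0$. Your version is in fact slightly more careful than the paper's, since you explicitly justify $\ep\|w_k(s)\|\leq \sqrt{2n}\,M_s\to 0$ via the Laplace second moment and Jensen's inequality, whereas the paper simply asserts $\lim_{t\to\infty}\ep\|w_k(t)\|=0$ and applies Proposition~\ref{lem:convlim} without this intermediate step.
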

\begin{proof}
From Equation~\eqref{eq:para}, we have that
\[
\lim_{t\rightarrow \infty}\gamma_t = 0 \mbox{, and }  \lim_{t\rightarrow \infty}\ep ||w_k(t)|| =0. 
\]
Applying Proposition~\ref{lem:convlim}, we have
\[
\lim_{t\rightarrow \infty}\sum_{s=1}^t\beta^{t-s} \gamma_s=0 \mbox{, and } \lim_{t\rightarrow \infty} \sum_{s=1}^t\beta^{t-s+1} ||w_k(s)||=0.
\]
Then, by taking the limit of  Equation~\eqref{eq:lemcon}, we derive 
\[
\lim_{t\rightarrow \infty}\ep || x_i(t)-x_j(t)|| = 0.
\]
Thus the iterative distributed algorithm converges. 
\end{proof}
Theorem~\ref{thm:con} shows that our proposed algorithm converges, which requires the expected distance between local values of different agents to converge to 0.
That is, the agents will eventually agree on a common value.

\subsection{Accuracy}
\label{sec:acc}
In this section, we establish bounds on the accuracy of the proposed iterative distributed algorithm.
We first state a lemma which compares the sum of distance from $z_i(t)$ to any fixed point $x'$ to that of distance from $x_i(t)$ to $x'$.
\label{sec:cop}
\begin{lemma}
\label{lem:z-x-w}   
Fixed any point $x'\in \X$, for our proposed iterative distributed algorithm, for all $i\in[N]$, the following holds,
\begin{equation}
\label{eq:z-x-w0}
\sum_{i\in[N]}||z_i(t)-x'||^2\leq \sum_{i\in[N]}||x_i(t-1)-x'+w_i(t)||^2.
\end{equation}
\end{lemma}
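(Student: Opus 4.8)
The plan is to exploit the doubly stochastic structure of $A_t$ together with convexity of the squared Euclidean norm. First I would recall from Equation~\eqref{eq:mech2} that $z_i(t) = \sum_{j\in[N]} a_{ij}(t)\, y_j(t)$, and that by Assumption on the communication graph each row of $A_t$ satisfies $a_{ij}(t)\ge 0$ and $\sum_{j\in[N]} a_{ij}(t) = 1$. Hence for each fixed $i$, the vector $z_i(t)$ is a convex combination of the vectors $\{y_j(t)\}_{j\in[N]}$, and the same convex weights express $z_i(t)-x'$ as a convex combination of $\{y_j(t)-x'\}_{j\in[N]}$, since the weights sum to $1$.

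Next I would apply the convexity of the map $v\mapsto \|v-x'\|^2$ (equivalently, Jensen's inequality) to obtain, for each $i\in[N]$,
\[
\|z_i(t)-x'\|^2 \;=\; \Bigl\| \sum_{j\in[N]} a_{ij}(t)\bigl(y_j(t)-x'\bigr)\Bigr\|^2 \;\le\; \sum_{j\in[N]} a_{ij}(t)\,\|y_j(t)-x'\|^2 .
\]
Then I would sum this inequality over $i\in[N]$ and exchange the order of summation:
\[
\sum_{i\in[N]}\|z_i(t)-x'\|^2 \;\le\; \sum_{i\in[N]}\sum_{j\in[N]} a_{ij}(t)\,\|y_j(t)-x'\|^2 \;=\; \sum_{j\in[N]} \Bigl(\sum_{i\in[N]} a_{ij}(t)\Bigr)\|y_j(t)-x'\|^2 .
\]
Here is where the \emph{column}-stochasticity of $A_t$ (not merely row-stochasticity) is used: $\sum_{i\in[N]} a_{ij}(t) = 1$ for every $j$, so the inner factor collapses and the right-hand side becomes $\sum_{j\in[N]}\|y_j(t)-x'\|^2$. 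Finally, substituting the update $y_j(t) = x_j(t-1) + w_j(t)$ from Equation~\eqref{eq:mech1} gives $\|y_j(t)-x'\|^2 = \|x_j(t-1) - x' + w_j(t)\|^2$, which after reindexing $j\to i$ is exactly the right-hand side of \eqref{eq:z-x-w0}.

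I do not anticipate a real obstacle here: the argument is a direct two-step estimate (Jensen, then the doubly stochastic sum swap). The only point that needs care is to invoke both stochasticity properties in the right places — row-stochasticity to make each $z_i(t)-x'$ a genuine convex combination so that Jensen applies, and column-stochasticity to telescope the double sum — and to note that the inequality is preserved under summation over $i$ without any cancellation issues because all terms are nonnegative.
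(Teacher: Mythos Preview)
Your proposal is correct and follows essentially the same route as the paper's proof: use row-stochasticity to write $z_i(t)-x'$ as a convex combination, apply convexity of $\|\cdot\|^2$ (the paper calls this step ``triangle inequality''), sum over $i$, swap the order of summation, and use column-stochasticity to collapse the inner sum. The only cosmetic difference is that the paper substitutes $y_j(t)=x_j(t-1)+w_j(t)$ at the outset rather than at the end.
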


We will derive a bound on the accuracy of the proposed algorithm (Theorem~\ref{thm:acc}).
This bound is derived using Lemma~\ref{lem:z-x-w} and strong convexity. 

%

\begin{theorem}
\label{thm:acc}
The  algorithm guarantees $d$-accuracy with 
\begin{equation}
d =C_1e^{-\frac{C_3c}{1-q}} +\frac{C_2^2c^2}{1-q^2} + \frac{8C_2^2nc^2p^2}{\epsilon^2(p-q)^2(1-p^2)}.
\end{equation}
\end{theorem}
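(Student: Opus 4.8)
The plan is to bound the expected squared distance $\ep\|\bar x(t) - x^*\|^2$ by splitting it through a triangle-like inequality into a disagreement term (how far the individual $x_i(t)$ are from their average $\bar x(t)$) and a centralized-progress term (how far a suitable centralized iterate is from the optimum $x^*$), and then take $t\to\infty$. First I would write down the recursion satisfied by $\bar x(t)$. Summing Equation~\eqref{eq:mech3} over $i$ and using double stochasticity of $A_t$ (so that $\frac1N\sum_i z_i(t) = \frac1N\sum_i y_i(t) = \bar x(t-1) + \frac1N\sum_i w_i(t)$, modulo the projection), $\bar x(t)$ evolves essentially as a projected gradient step on $f = \sum f_i$ with step size $\gamma_t$, perturbed by the average noise $\frac1N\sum_i w_i(t)$ and by the gradient-evaluation mismatch $\grad f_i(z_i(t))$ versus $\grad f_i(\bar x(t-1))$. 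The latter mismatch is controlled by $C_2$-boundedness of gradients together with $\|z_i(t) - \bar x(t-1)\|$, which in turn is governed by Lemma~\ref{lem:con} and Lemma~\ref{lem:matconv}.

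Next I would run the standard strongly-convex projected-gradient contraction argument, feeding in Lemma~\ref{lem:z-x-w}. Expanding $\|\bar x(t) - x^*\|^2$ (using nonexpansiveness of $\pj\X$ to drop the projection), one gets a term $\|\bar x(t-1) - x^*\|^2$, a cross term $-2\gamma_t \grad f(\bar x(t-1))^T(\bar x(t-1)-x^*)$ which strong convexity (Assumption~\ref{ass:function}(iii)) bounds above by $-C_3 \gamma_t \|\bar x(t-1)-x^*\|^2$ (plus the fact that $x^*$ minimizes $f$), a quadratic step term $\gamma_t^2\|\grad f\|^2 \le \gamma_t^2 (NC_2)^2$ — wait, more carefully $C_2^2$ after the right normalization — and a noise term involving $\ep\|\frac1N\sum_i w_i(t)\|^2$, which for $n$ independent $Lap(M_t)$ coordinates per agent is $\frac{2nM_t^2}{N}$ up to constants. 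This yields a recursion of the schematic form
\[
\ep\|\bar x(t)-x^*\|^2 \le (1-C_3\gamma_t)\,\ep\|\bar x(t-1)-x^*\|^2 + C_2^2\gamma_t^2 + (\text{noise})\,M_t^2 + (\text{disagreement cross terms}).
\]
Then I would unroll this recursion. With $\gamma_t = cq^{t-1}$, the product $\prod_{s=1}^t(1-C_3\gamma_s)$ is bounded above by $\exp(-C_3\sum_{s=1}^t \gamma_s) \le \exp(-C_3 c/(1-q))$ in the limit, giving the first summand $C_1 e^{-C_3 c/(1-q)}$ (with $C_1$ bounding the initial squared distance via the diameter). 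The accumulated $\sum_s \gamma_s^2 = c^2\sum_s q^{2(s-1)} \to c^2/(1-q^2)$ gives the second summand, and $\sum_s M_s^2 \propto M_1^2 \sum_s p^{2(s-1)} \to M_1^2/(1-p^2)$ with $M_1 = 2C_2\sqrt n\, cp/(\epsilon(p-q))$ gives exactly the third summand $8C_2^2 n c^2 p^2/(\epsilon^2(p-q)^2(1-p^2))$ after squaring.

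The main obstacle I expect is handling the coupling between the consensus (disagreement) dynamics and the optimization dynamics cleanly: the cross terms arising from $\grad f_i(z_i(t)) - \grad f_i(\bar x(t-1))$ and from $z_i(t) \ne \bar x(t-1)$ must be shown to contribute only lower-order terms that vanish in the $t\to\infty$ limit (using Lemma~\ref{lem:con}, Lemma~\ref{lem:matconv}, Proposition~\ref{lem:convlim}, and the summability of $\gamma_t$), so that they do not enlarge the three displayed terms. A secondary delicate point is making sure the projection $\pj\X$ in Equation~\eqref{eq:mech3} is correctly absorbed — this is where Lemma~\ref{lem:z-x-w} is used, since it lets us compare $\sum_i\|z_i(t)-x'\|^2$ to $\sum_i\|x_i(t-1)-x'+w_i(t)\|^2$ and thereby pass the analysis through the pre-projection quantity. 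Once those are controlled, taking $\limsup_{t\to\infty}$ of the unrolled recursion and using $\lim \gamma_t = \lim M_t = 0$ yields the claimed $d$.
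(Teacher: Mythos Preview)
Your approach differs from the paper's in one key structural choice, and that choice is exactly what creates the obstacle you anticipate. You propose to track $\ep\|\bar x(t)-x^*\|^2$ directly and write a recursion for the \emph{average} iterate; this forces you to confront the gradient-evaluation mismatch $\grad f_i(z_i(t))-\grad f_i(\bar x(t-1))$, the fact that $\bar x(t)$ is an average of per-agent projections (so nonexpansiveness of $\pj\X$ does not apply to it without first passing through Jensen), and the resulting consensus--optimization coupling, which you plan to control via Lemma~\ref{lem:con}. The paper sidesteps all of this by tracking instead the averaged \emph{per-agent} quantity $S(t)\deq \frac{1}{N}\sum_{i}\ep\|x_i(t)-x^*\|^2$. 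On each $\|x_i(t)-x^*\|^2$ one can apply projection nonexpansiveness and strong convexity at the point $z_i(t)$ itself (no mismatch), obtaining $\|x_i(t)-x^*\|^2\le (1-C_3\gamma_t)\|z_i(t)-x^*\|^2 + C_2^2\gamma_t^2$; summing over $i$ and invoking Lemma~\ref{lem:z-x-w} with $x'=x^*$ then gives the clean recursion $S(t)\le(1-C_3\gamma_t)S(t-1)+C_2^2\gamma_t^2 + 2(1-C_3\gamma_t)M_t^2$ after taking expectations (using $\ep w_i(t)=0$, independence from $x_i(t-1)$, and $\ep\|w_i(t)\|^2=2M_t^2$). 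No disagreement cross terms appear, and Lemma~\ref{lem:con} is never used in the accuracy proof. The recursion is unrolled exactly as you describe---$\prod_s(1-C_3\gamma_s)\le e^{-C_3c/(1-q)}$, $\sum\gamma_s^2=c^2/(1-q^2)$, $\sum M_s^2$ giving the third term---and only at the very last step does one invoke Jensen, $\ep\|\bar x(t)-x^*\|^2\le S(t)$, to pass to the average.

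Your route is not wrong in principle, but the coupling you flag is real and would require extra work (and likely looser constants) to discharge, whereas the paper's choice of Lyapunov quantity $S(t)$ makes that coupling vanish entirely. The takeaway is that Lemma~\ref{lem:z-x-w} plus double stochasticity already encode all the ``averaging'' needed, so there is no need to separate consensus from optimization in the accuracy analysis.
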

Proof of the theorem can be found in appendix.
In the above theorem, we derived a bound of the accuracy the algorithm guarantees. 
This bound depends on the 4 parameters $\epsilon, c,p,q$. Fixing other three parameter,  the accuracy has the order of $d\sim O(\frac{1}{\epsilon^2})$.
As $\epsilon$ converges to 0, that is, for complete privacy for individuals, the accuracy becomes arbitrarily bad. 


\subsection{Experiment and Discussion}
\label{sec:disc}
The algorithm has four parameters: the privacy level $\epsilon$, the initial step size $c$, the step size decay rate $q$ and the noise decay rate $p$.
We have established that the algorithm guarantees $\epsilon$-differential privacy for any choice of  parameters.
If we fix the privacy level $\epsilon$, the dependency of the accuracy level of the algorithm on each of the other three parameters based on the partial derivative of $d$. 
Since the accuracy level the three parameters $d$ is not convex on $c,q,p$, the global optimal choice of the parameters does not have a clean close form expression.
However, we observe that if we fix any other two parameters,  the other parameter has a local optima:  
\begin{enumerate}[(I)]

    \item fixing parameters $q,p$, the best accuracy can be achieved at $c^*$ that solves the equation 
    \[C_1C_3e^{-\frac{C_3 c^*}{1-q}}=2C_2^2c^*\left(\frac{1}{1-q^2}+ \frac{8np^2}{\epsilon^2(p-q)^2(1-p^2)}\right),\]
    \item fixing parameters $c,p$, the best accuracy can be achieved at $q^*$ which solves the equation $\frac{C_1C_3 c}{1-q^{*2}}e^{-\frac{C_3c}{1-q^*}} =\frac{2q^{*2}C_2^2c^2}{(1-q^{*2})^2} + \frac{16C_2^2nc^2p^2}{\epsilon^2(p-q^*)^3(1-p^2)}$,
    \item fixing parameters $c,q$, the best accuracy can be achieved at
    $p^*$ which solves the equation $q(1-p^{*2}) = p^{*2}(p^*-q)$.
\end{enumerate}
In practice, we can tune the parameters with the following heuristic:
(i) pick $c,q,p$ randomly initially, (ii) fix two parameters and tune the remaining parameter to the local optima, and 
(iii) repeat step (ii) several times with different choice of parameters to be tuned.
We use the  proposed algorithm to solve Example~\ref{ex:problem} where the parameters ($c,q,p$)  are tuned with the above heuristic.
\begin{example}
\label{ex:solution}
\hili{
We solve a version of Example~\ref{ex:problem} with seven different privacy levels: $\epsilon = 0.1,0.2, 0.5,1, 2,5$ and $10$. 
We assign the domain of optimization $\X$ as the unit square $\X = \{(x,y)\in \reals^2 | -1 \leq x,y\leq 1\}$. 
For each privacy level $\epsilon$, we first decide the parameters $(c,q,p)$ using a heuristic, and then solve the DPOD repeatedly for $5000$ times. Each time, we record the  squared distance from the convergent point to the optima.  Then, the accuracy level $d$ of a privacy level is approximated  by the average of the squared distances over the 5000 runs.
The  experimental results are illustrated in Fig~\ref{fig:ex}.
}
\end{example}

\begin{figure}
\centering
\includegraphics[width = .3\textwidth]{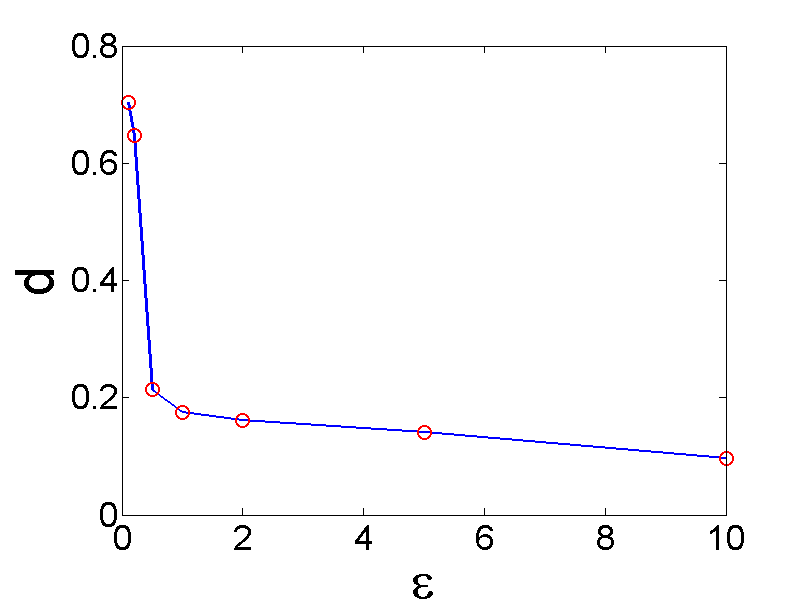}
\caption{Accuracy level $d$ as a function of privacy level $\epsilon$ for Example~\ref{ex:solution}.}
\label{fig:ex}
\end{figure}

\section{Conclusion}

We formulated the private distributed optimization (PDOP) problem in which $N$ agents are required to minimize a global cost function $f$ that is the sum $\Sigma_{i=1}^N f_i$ of $N$ cost functions for the individual agents. The agents may exchange information about their estimates for the optimal solution, but are required to keep their cost functions, namely the $f_i$'s, differentially private from an adversary with access to all the communication. We studied structurally simple iterative distributed algorithms for solving PDOP. Like other iterative algorithms for consensus and optimization, our algorithm proceeds in rounds. In each round, however, an agent first adds a vector of carefully chosen random noise to its current estimate for the optimal point and broadcasts this noisy estimate to its neighbors. The noise is chosen from a Laplace distribution that converges to the Dirac distribution with increasing number of rounds.In the second phase, the agent updates its estimate by (a) taking a weighted average of the noisy estimates it received from its neighbors and (b) moving the estimate, by a carefully chosen step-size,  in opposite direction of a the gradient of its own cost function ($f_i$ for agent $i$). The communication topology and hence the neighbors of an agent may change from one round to another, yet,  this structurally simple algorithm solves PDOP. We establish its differential privacy as well as its approximate convergence to the optimal point. The analysis also reveals the dependence of the accuracy and the privacy levels of the algorithm on the the noise and the step-size parameters. 
We observe that, by fixing other parameters, the accuracy level has the order of $O(\frac{1}{\epsilon^2})$.

Accurately solving distributed coordination problems require information sharing. Participants in such distributed coordination might be willing to sacrifice on the quality of the solution provided this loss is commensurate with the gain in the level of privacy of their individual preferences. Thus, a natural question is to quantify the cost or inaccuracy incurred in solving the problem as a function of the privacy level. In this paper, we have addressed this question in the context of PDOP and the class of iterative algorithms. Even for the class of iterative algorithms, establishing a lower-bound on the maximum level of differential privacy that can be achieved for a certain level of accuracy remains an open problem.

\bibliographystyle{abbrv}
\bibliography{Privacy,sayan1}

\newpage
\appendix
%
%
Proof of Proposition~\ref{lem:convlim}:
\begin{proof}
$\{a_k\}_{k=1}^\infty$ is a convergent sequence, thus bounded.
Let $|a_k|\leq M$ for all $k\in \naturals$.
Fixed any $\epsilon>0$, there exists an $N_1\in\naturals$ such that for all $k\geq N_1$, $|a_k|\leq \frac{\epsilon(1-\beta)}{2}$.
There exists an $N_2\in\naturals$ such that $\beta^{N_2}\leq \frac{\epsilon(1-\beta)\beta^{N_1}}{2M(\beta-\beta^{N_1})}$.
For all $n\geq \max\{N_1,N_2\}$, the absolute value of the summation in Equation~\eqref{eq:convlim} is bounded:
\begin{equation}
\label{eq:convbnd}
\begin{array}{rl} 
|\sum_{k=1}^n \beta^{n-k}a_k| &\leq \sum_{k=1}^{N_1-1} |\beta^{n-k}a_k| + \sum_{k=N_1}^{n} |\beta^{n-k}a_k| \\
&\leq  M\sum_{k=1}^{N_1-1} \beta^{n-k} + \sum_{k=N_1}^{n} \beta^{n-k}|a_k| 
\end{array}
\end{equation}
The first summation of  Expression~\eqref{eq:convbnd} is $\sum_{k=1}^{N_1-1} \beta^{n-k} = \frac{\beta^{n}(\beta-\beta^{N_1})}{\beta^{N_1}(1-\beta)}$. 
For $n\geq N_2$, we have $\beta^n\leq \frac{\epsilon(1-\beta)\beta^{N_1}}{2M(\beta-\beta^{N_1})}$. 
Thus, 
\[
M\sum_{k=1}^{N_1-1} \beta^{n-k} \leq M \frac{\epsilon(1-\beta)\beta^{N_1}}{2M(\beta-\beta^{N_1})}\frac{(\beta-\beta^{N_1})}{\beta^{N_1}(1-\beta)} \leq\frac{\epsilon}{2}.
\]
In the second summation of  Expression~\eqref{eq:convbnd}, we have $|a_k|\leq \frac{\epsilon(1-\beta)}{2}$ from the construction of $N_1$. Thus
\begin{eqnarray*}
\sum_{k=N_1}^{n} \beta^{n-k}|a_k| &\leq& \frac{\epsilon(1-\beta)}{2} \sum_{k=N_1}^{n} \beta^{n-k}\leq \frac{\epsilon(1-\beta)}{2} \sum_{i=0}^{\infty} \beta^{i} \\
& =&\frac{\epsilon(1-\beta)}{2} \frac{1}{1-\beta} = \frac{\epsilon}{2}.
\end{eqnarray*}
Substitute the above inequilities into Equation~\eqref{eq:convbnd}, we have $|\sum_{k=1}^n \beta^{n-k}a_k|\leq \epsilon$ for $n\geq \max\{N_1,N_2\}$.
Thus it follows that  $\lim \limits_{n\rightarrow \infty} \sum_{k=1}^n \beta^{n-k}a_k = 0$.
\end{proof}

Proof of Proposition~\ref{lem:uniqueinverse}:
\begin{proof}
Fixed $\P$, the communication graphs $A_t$ are fixed. Fixed  an observation sequence $\rho$, the messages $y(t)$ at each round $t$ are fixed.
From Equation~\eqref{eq:mech2},  for each $i\in[N]$ and $t\in \naturals$, $z_i(t)$ is uniquely determined. Then by Equation~\eqref{eq:mech3}, recalling that $f_i$ is specified by $\P$, we can conclude that $x_i(t)$ is uniquely specified for each $i\in[N]$ and $t\in\naturals$.
Thus, the execution $\alpha = x(0),\langle x(1),y(1),z(1)\rangle,\dots = \R^{-1}(\P,\rho,x(0))$ is uniquely determined.
\end{proof}
Proof of Lemma~\ref{lem:con}:
\begin{proof}
For brevity, we denote 
\[u_i(t) = \pj \X [z_i(t)- \gamma_t \grad f_i(z_i(t))] -z_i(t).\]
Then, we can rewrite Equation~\eqref{eq:mech3} as 
\begin{eqnarray*}
x_i(t)
= \sum_{j\in [N]} a_{ij}(t)x_j(t-1)+\sum_{j\in [N]} a_{ij}(t)w_j(t)+u_i(t).
\end{eqnarray*}
By the property of projection, we have
Recursively apply the above equation, we have:
\begin{equation}
\begin{array}{l}
x_i(t) = \sum_{j\in[N]}\Phi(t,0)_{i,j}x_j(0) +\sum_{s=1}^t\sum_{j\in[N]}\Phi(t,s)_{i,j}u_j(s) \\
 \qquad \qquad + \sum_{s=0}^{t-1}\sum_{j\in[N]} \Phi(t,s-1)_{i,j}w_j(s).
\end{array}
\end{equation}
Thus, the distance between two local states $x_i(t)$ and $x_j(t)$  is:
\begin{eqnarray*}
& &||x_i(t)-x_j(t)||\\
&=&\sum_{k\in[N]}|\Phi(t,0)_{i,k}-\Phi(t,0)_{j,k}| ||x_k(0)|| \\
& & \ +\sum_{s=1}^t\sum_{k\in[N]}|\Phi(t,s)_{i,k}-\Phi(t,s)_{j,k}| ||u_k(s)|| \\
&& \ + \sum_{s=0}^{t-1}\sum_{k=1}^N|\Phi(t,s-1)_{i,k}-\Phi(t,s-1)_{j,k}|||w_k(s)||. \\
\end{eqnarray*}
By applying Lemma~\ref{lem:matconv}, the above expression can be reduced to
\begin{eqnarray*}
& &||x_i(t)-x_j(t)|| \\
&\leq& 2N\theta \beta^t \sup_{k\in[N]}||x_k(0)||  + 2N\theta\sum_{s=1}^t \beta^{t-s}\sup_{k\in[N]}||u_k(s)|| \\
& & \   + 2N\theta\sum_{s=1}^t\beta^{t-s+1} ||w_k(s)||.
\end{eqnarray*}
From Assumption~\ref{ass:function}, we have $||x_k(0)||$ is bounded, and $||\grad f_k(z_k(s))|| \leq C_2$. From the property of projection,  $||u_k(s)||=||\pj \X[z_k(s)- \gamma_t\grad f_k(z_k(s))]-z_k(s)||\leq ||\gamma_t\grad f_k(z_k(s))|| \leq \gamma_tC_2$. Thus, we derive 
\begin{equation}
\label{eq:convbnd}
||x_i(t)-x_j(t)||
\leq M_1\beta^t+M_2\sum_{s=1}^t \gamma_s\beta^{t-s} + M_3\sum_{s=1}^t\beta^{t-s+1} ||w_k(s)||,
\end{equation}
where $M_1 =  2N\theta \sup \limits_{x\in\X}||x||, M_2 =2NC_2\theta$ and $M_3 = 2N\theta$.
\end{proof}
Proof of Lemma~\ref{lem:z-x-w}:
\begin{proof}
From Equation~\eqref{eq:mech1}-\eqref{eq:mech2}, we have 
$z_i(t)=\sum_{j\in[N]}a_{i,j}(t)(x_j(t-1)+w_j(t)).$
It follows that,
\begin{equation}
\label{eq:z-x-w}
\begin{array}{rl}
\sum_{i\in[N]}||z_i(t)-x'||^2 =& \sum_{i\in[N]} ||\sum_{j\in[N]}a_{i,j}(t)(x_j(t-1) \\
 & \ +w_j(t))  -x'||^2 
 \end{array}
\end{equation}
From the assumption that the matrix $A_{t}$ is doubly stochastic, we have$\sum_{j\in[N]}a_{i,j}(t)=1$. So we have $x' = \sum_{j\in[N]}a_{i,j}(t)x'$. Applying this trick to Equation~\eqref{eq:z-x-w}, we have
\begin{equation}
\label{eq:z-x-w1}
\sum_{i\in[N]}||z_i(t)-x'||^2 = \sum_{i\in[N]} ||\sum_{j\in[N]}a_{i,j}(t)\left(x_j(t-1)+w_j(t)-x'\right)||^2 .
\end{equation}
By triangle inequality and reordering of summation, we have
\begin{equation}
\label{eq:reorder1}
\begin{array}{rl}
&||\sum_{j\in[N]}a_{i,j}(t)\left(x_j(t-1)+w_j(t)-x'\right)||^2 \\
\leq& \sum_{i\in[N]}\sum_{j\in[N]}a_{i,j}(t) ||x_j(t-1)+w_j(t)-x'||^2. \\
=&\sum_{j\in[N]}\sum_{i\in[N]}a_{i,j}(t)||x_j(t-1)+w_j(t)-x'||^2.
\end{array}
\end{equation}
Again from the double stochasticity of $A_{t}$, $\sum_{i\in[N]}a_{i,j}(t)=1$. Then the above expression can be reduced to
\begin{eqnarray*}
&&\sum_{j\in[N]}\sum_{i\in[N]}a_{i,j}(t)||x_j(t-1)+w_j(t)-x'||^2  \\
&=& \sum_{j\in[N]}||x_j(t-1)+w_j(t)-x'||^2.
\end{eqnarray*}
Combining above equation with Equations~\eqref{eq:z-x-w1} and~\eqref{eq:reorder1}, we derive
\[
\sum_{i\in[N]}||z_i(t)-x'||^2\leq \sum_{j\in[N]}||x_j(t-1)-x'+w_j(t)||^2.
\]
By changing the variable of the right-hand side, the lemma follows.
\end{proof}

Proof of Theorem~\ref{thm:acc}:

\begin{proof}
From the property of stronly convex function, we have $\grad f_i(x))(y-x) \leq f_i(y)-f_i(x) - \frac{C_3}{2}||y-x||^2$ for any $x,y\in\X$. We denote $u_i(t) = -\grad f_i(z_i(t))$. Let $x^*$ be the minimum of the problem. Thus 
\begin{equation}
\label{eq:gradineq}
\begin{array}{rl}
u_i^T(t) (z_i(t)-x^*)\leq& f_i(x^*)-f_i(z_i(t))- \frac{C_3}{2}||z_i(t)-x^*||^2 
\\ \leq& - \frac{C_3}{2}||z_i(t)-x^*||^2.
\end{array}
\end{equation}
Take 2-norm on both side of Equation~\eqref{eq:mech3}, using the property of projection, we have
\begin{eqnarray*}
&&||x_i(t)-x^*||^2 \leq ||z_i(t)+\gamma_t u_i(t) -x^*||^2  \\
&=& ||z_i(t)-x^*||^2 + 2\gamma_t u_i^T(t)(z_i(t)-x^*)+ \gamma_t^2||u_i(t)||^2 .
\end{eqnarray*}
Combining this equation with Equation~\eqref{eq:gradineq} we have
\[
\begin{array}{c}
||x_i(t)-x^*||^2\leq   (1 - C_3\gamma_t)||z_i(t)-x^*||^2   +C_2^2\gamma_t^2.
\end{array}
\]
Sum up above equations over $i\in[N]$ and divided by $N$, we have
\begin{equation}
\label{eq:fromoptineq}
\frac{1}{N}\sum_{i\in[N]}||x_i(t)-x^*||^2\leq  \frac{1 - C_3\gamma_t}{N}\sum_{i\in[N]}||z_i(t)-x^*||^2  + C_2^2 \gamma_t^2.
\end{equation}
We will replace the terms $||z_i(t)-x^*||^2$ using Lemma~\ref{lem:z-x-w}. From Equation~\eqref{eq:z-x-w0}, we have:
\begin{eqnarray*}
&&\sum_{i\in[N]}||z_i(t)-x^*||^2 \leq \sum_{i\in[N]}||x_i(t-1)-x^*+w_i(t)||^2\\
&=&\sum_{i\in[N]}||x_i(t-1)-x^*||^2+\sum_{i\in[N]} [(x_i(t-1)-x^*)^Tw_i(t)] \\
 & & \ +\sum_{i\in[N]}||w_i(t)||^2
\end{eqnarray*}
 Under the condition $w_i(t)\sim Lap(M_{t})$, we have $\ep[w_i(t)]=0$ and $\ep||w_i(t)||^2=2M_{t}^2$. Noticing that $w_i(t)$ and  $x_i(t-1)$ are independent,   we have:
\begin{equation}
\label{eq:z-x}
\sum_{i\in[N]}\ep||x_i(t)-x^*||^2  \leq \sum_{i\in[N]} \ep||x_i(t-1)-x^*||^2 + 2NM_{t}^2.
\end{equation}
For simplicity we denote $S(t)\deq \frac{1}{N}\sum_{i\in[N]}\ep||x_i(t)-x^*||^2$.
Combining Equation~\eqref{eq:fromoptineq} and~\eqref{eq:z-x}, we have:
\begin{eqnarray}
\label{eq:indS}
S(t) \leq (1 - C_3\gamma_t)  S(t-1) + C_2^2\gamma_t^2 + 2(1 - C_3\gamma_t)M_{t}^2
\end{eqnarray}
Recursively apply Equation~\eqref{eq:indS}, we ultimately get:
\begin{equation}
\label{eq:19}
\begin{array}{rl}
S(t) \leq& \prod_{s=1}^t (1 - C_3\gamma_s) S(0) + C_2^2\sum_{s=1}^t \gamma_s^2\prod_{l=s+1}^t(1 - C_3\gamma_l) \\
& \ + 2\sum_{s=1}^t M_s^2\prod_{l=s}^t(1 - C_3\gamma_l).
\end{array}
\end{equation}
We define $\Psi(k,s) \deq \prod_{t=s+1}^k(1-C_3\gamma_t)$.
From Assumption~\ref{ass:function},we have that $S(0)\leq 2C_1$. 
Thus, we have 
\[
S(t)\leq 2C_1\Psi(t,0) + C^2_2 \sum_{s=1}^t \gamma_s^2\Psi(t,s) + 2\sum_{s=1}^t M_s^2 \Psi(t,s-1).
\]
The above equation has three terms, each of which involves $\Psi(k,s)$.We will give a bound to the term $\Psi(k,s)$.
Since $\Psi(k,s)$ is the product of factors no larger than $1$, $\Psi(k,s)\leq 1$ by definition. Thus, the above inequality reduces to
\[
\begin{array}{rl}
&S(t)\leq 2C_1\Psi(t,0) + C^2_2 \sum_{s=1}^t \gamma_s^2 + 2\sum_{s=1}^t M_s^2 \\
\leq& 2C_1\Psi(t,0) + C^2_2 \sum_{s=1}^\infty \gamma_s^2 + 2\sum_{s=1}^\infty M_s^2 \\
\end{array}
\]
Substituting Equation~\eqref{eq:para} into the right-hand side, we have,
\[\gamma_t=cq^{t-1}\]
\[M_t =2C_2\sqrt n \frac{c p}{\epsilon(p-q)} p^{t-1}. \]
\begin{equation}
\label{eq:bndwops}
S(t)\leq 2C_1\Psi(t,0) + \frac{C_2^2c^2}{1-q^2} + \frac{8C_2^2nc^2p^2}{\epsilon^2(p-q)^2(1-p^2)}.
\end{equation}
To compute a tighter bound  of  term $\Psi(t,0)$,
we use a standard property of exponential function, that is, $1-a\leq e^{-a}$ for any $a\in \reals$. Thus
\[
\Psi(t,0) = \prod_{s=1}^t (1 - C_3\gamma_t) \leq e^{- \sum_{s=1}^t C_3\gamma_t} \leq e^{-\frac{C_3c(1-q^t)}{1-q}}.
\]
Substitute the above inequality into Equation~\eqref{eq:bndwops}, we have:
\[
S(t)\leq  2C_1e^{-\frac{C_3c(1-q^t)}{1-q}} +\frac{C_2^2c^2}{1-q^2} + \frac{8C_2^2nc^2p^2}{\epsilon^2(p-q)^2(1-p^2)}.
\]
By triangular inequality, we have 
\begin{eqnarray*}
& &\ep||\bar x(t)-x^*||^2 =  \ep||\frac{1}{N}\sum_{i\in[N]}x_i(t)-x^*||^2 \\
&\leq&\frac{1}{N}\sum_{i\in[N]}\ep||x_i(t)-x^*||^2=S(t)\\
\end{eqnarray*}
Letting $t\rightarrow \infty$, we have
\[
\limsup_{t\rightarrow \infty} \ep||\bar x(t)-x^*||^2 \leq  C_1e^{-\frac{C_3c}{1-q}} +\frac{C_2^2c^2}{1-q^2} + \frac{8C_2^2nc^2p^2}{\epsilon^2(p-q)^2(1-p^2)}.
\]
Thus the theorem follows.
\end{proof}


\end{document}